\newcommand{\bra}[1]{\langle #1|}
\newcommand{\ket}[1]{|#1\rangle}
\newcommand{\braket}[2]{\langle #1|#2\rangle}
\newcommand{\ketbra}[2]{|#1\rangle \langle #2|}
\newcommand{\Tr}{\text{Tr}}
\newtheorem{theorem}{Theorem}
\newtheorem{definition}[theorem]{Definition}
\newtheorem{lemma}[theorem]{Lemma}
\newtheorem{remark}[theorem]{Remark}
\begin{document}
\title{Trade-off coding for universal qudit cloners motivated by the Unruh effect}

\author{Tomas~Jochym-O'Connor\\
{\it \small Institute for Quantum Computing, Department of Physics and Astronomy},\\{\it \small University of Waterloo, 200~University~Avenue~West, Waterloo,~Ontario, N2L~3G1, Canada}\\
\and Kamil~Br\'adler and Mark~M.~Wilde\\
{\it \small School of Computer Science, McGill University,}\\{\it \small Montreal,~Quebec, H3A~2A7, Canada}}

\maketitle

\begin{abstract}
A ``triple trade-off" capacity region of a noisy quantum channel provides a more complete description of its capabilities than does a single capacity formula. However, few full descriptions of a channel's ability have been given due to the difficult nature of the calculation of such regions---it may demand an optimization of information-theoretic quantities over an infinite number of channel uses. This work analyzes the $d$-dimensional Unruh channel, a noisy quantum channel which emerges in relativistic quantum information theory. We show that this channel belongs to the class of quantum channels whose capacity region requires an optimization over a single channel use, and as such is tractable. We determine two triple-trade off regions, the quantum dynamic capacity region and the private dynamic capacity region, of the $d$-dimensional Unruh channel. Our results show that the set of achievable rate triples using this coding strategy is larger than the set achieved using a time-sharing strategy. Furthermore, we prove that the Unruh channel has a distinct structure made up of universal qudit cloning channels, thus providing a clear relationship between this relativistic channel and the process of stimulated emission present in quantum optical amplifiers.

\end{abstract}
\noindent{ \it Keywords\/}: Unruh channel, universal cloning channel, Hadamard channel, trade-off coding, quantum Shannon theory

\section{Introduction}

The concept of a noisy quantum channel and the ability to communicate, or rather transfer quantum information, from one party to another using quantum mechanical phenomena has been important in quantum communication. A natural question is to determine the capabilities of a given noisy quantum channel for communicating classical and quantum information at the expense of noiseless entanglement. Protocols such as super-dense coding~\cite{Bennett_SuperDense} and quantum teleportation~\cite{Bennett_Teleportation} have sparked interest in the field of quantum information because they provide concrete examples of the trade-off between classical and quantum resources.

This work expands the study of the trade-off of these resources over a qudit Unruh channel~\cite{Bradler_Rindler, Bradler_Hadamard}. In  the $d$-dimensional case, there is no longer a restriction to encoding qubits with a dual-rail encoding. A sender and receiver can encode the information into a single-excitation $d$-dimensional basis, expanding the physical freedom available for encoding information. The qudit Unruh channel arises in a relativistic setting where a receiver, named Bob, is accelerating uniformly with respect to the reference frame of the sender, named Alice, who is sending encoded information in the form of multi-rail photonic qudits.

We prove that the structure of the $d$-dimensional Unruh channel is directly related to $d$-dimensional universal cloning channels~\cite{Buzek, Gisin, Werner, Keyl, Fan}, as was previously shown in the case of the qubit Unruh channel~\cite{Bradler_Cloning} and the qubit transformation present near black holes, whose metric is locally equivalent to the spacetime of an accelerating observer~\cite{Adami}. Such channels arise from the process of stimulated emission and can occur in physical systems such as the amplification of light in erbium-doped fibers~\cite{Fasel}. This leads to an alternative interpretation of the Unruh channel as a transformation present in optical amplifiers when encoding optical qudits in time-bin photons through an optical fiber with $d$ spatial modes~\cite{deRiedmatten, Thew, Scarani}. Each mode would pass through an optical amplifier and the output state would correspond to that of the Unruh channel where the optical gain of each amplifier now plays the role of the acceleration parameter in the relativistic setting.

A channel's ability to transmit multiple types of information is described by its set of achievable rate tuples. Shor gave the first compact characterization of rate pairs for the transmission of classical information along with the consumption of entanglement~\cite{Shor_CE}, and Devetak and Shor subsequently characterized all achievable rate pairs for classical and quantum communication over a noisy quantum channel~\cite{Devetak_CQ}. Hsieh and Wilde then obtained a characterization of the trade-off between the three resources of classical communication, quantum communication, and entanglement~\cite{Hsieh_CQE}.

While we have formulas for the capacity regions of achievable rate triples, the analysis of these capacity regions can be difficult in practice because it could involve an optimization of entropic formulas over a potentially infinite number of uses of the channel. However, the analysis of these capacity regions becomes tractable when a capacity formula for a quantum channel single-letterizes, that is, when we can evaluate the formulas with respect to one channel use. Some examples of quantum channels that are known to have a single-letter triple trade-off region are the qubit erasure channel, the completely depolarizing channel, the class of Hadamard channels~\cite{Bradler_Hadamard, Hsieh_CQEexamples}.

An analogy can be drawn from the triple trade-off region of quantum resources for a quantum channel to that of the trade-off between the classical resources of public communication, private communication, and shared secret key~\cite{Collins_CAnalogue}. Wilde and Hsieh studied the trade-off formulas for the use of these noiseless resources along with a quantum channel to generate other noiseless resources~\cite{Wilde_PrivateQuantumTO}. The triple trade-off region for a noisy quantum channel combined with these resources is also tractable for the class of Hadamard channels.

In this paper, we prove that universal qudit cloning and Unruh channels belong to the Hadamard class of channels, and thus they belong to the class of known channels whose triple trade-off capacity regions single-letterize. To do so, we first show that the qudit Unruh channel has a particular block diagonal form where each block is a universal qudit cloner. Showing that the Unruh channel is a Hadamard channel can be reduced to showing that a $1\rightarrow 2$ universal qudit cloner is a Hadamard channel. It was shown for $d=2$ that the complementary channel of a $1 \rightarrow 2$ universal qubit cloner is a convex combination of completely dephasing channels~\cite{Bradler_Hadamard}, but in higher dimensions no such direct comparison to known channels was clear. However, a combinatorial argument is given in this work showing that a $1\rightarrow 2$ universal qudit cloner is Hadamard. The resulting triple trade-off capacity region has a richer structure than that of the qubit case because its computation requires a set of parameters that grow with dimension size.

This paper is structured as follows. Section~\ref{sec:background} reviews the mathematical structure of the qudit Unruh channel and the formulas for the various capacity regions that are of interest in our study of this channel. Section~\ref{sec:hadamard} presents one of the main results of the paper, that the qudit Unruh channel is a Hadamard channel, and this is based on a derivation in Appendix~A. 
This result implies that the capacity formulas for this channel single-letterize~\cite{Bradler_Hadamard}, and it thus enables the calculation of the capacity regions. In Section~\ref{sec:capacity_regions}, we provide plots of the various capacity regions and show how a trade-off coding strategy beats time-sharing. We conclude in Section~\ref{sec:conclusion} with some final remarks and present some open questions in relativistic quantum information theory.

\section{Background and Review}
\label{sec:background}

\subsection{Channel Structure}
\label{sec:channel_structure}
The qudit Unruh channel has appeared in a relativistic setting where two inertial observers Alice and Bob, henceforth denoted $A$ and $B$, are communicating via the exchange of bosons through a quantum channel, and an eavesdropper Eve, denoted $E$, is accelerating uniformly with respect to Alice and Bob's reference frame~\cite{Bradler_Rindler}. However, in this paper we consider a variant of the above situation with the receiver, Bob, now accelerating uniformly with respect to the sender, Alice, and the environment of the channel is represented by Eve.

%
%
%
%
\begin{definition}[\cite{Bradler_Rindler}]
The qudit Unruh channel $\mathcal{N}^{A \rightarrow B}$ is defined by the following map:
\begin{align}
\mathcal{N}^{A \rightarrow B}(\rho_{A}) = \Tr_E (\mathcal{U} \rho_{A}\otimes \ketbra{\text{vac}}{\text{vac}}_E \mathcal{U}^{\dagger}),
\end{align}
where $\rho_{AE}$ is the input state of the $A$ subsystem combined with that of the environment $E$, and $\mathcal{U}=\otimes_{i=1}^d \mathcal{U}_{A_i E_i}$, such that
\begin{align}
\mathcal{U}_{A_i E_i}(r) = \dfrac{1}{\cosh{r}} \exp[ \tanh{r} a_i^{\dagger} e_i^{\dagger}] \times \exp[ - \ln{ \cosh{r}(a_i^{\dagger}a_i + e_i^{\dagger} e_i) }  ] \times \exp[ -\tanh{r} a_i e_i] ,
\end{align}
where $a_i^{\dagger}$ and $a_i$ are the respective creation and annihilation operators for the $A$ subsystem, $e_i^{\dagger}$ and $e_i$ are the respective creation and annihilation operators for the $E$ subsystem, and $r$~characterizes the acceleration of the receiver.
\end{definition}

The following commutation relations hold for the case of bosonic creation and annihilation operators: $[a_i, a_j^{\dagger}] = \delta_{ij}$, $[a_i^{\dagger},a_j^{\dagger}] = 0$,  $[a_i^{\dagger},e_j^{\dagger}] = 0$,  $[a_i^{\dagger},e_j] = 0$. The quantum states we consider in this work are photonic mode states expressed in a multi-rail basis. 
%
%
%
%
We represent a $d$-mode photonic state as $\ket{(n_1, \cdots, n_d)}$, where the value $n_i$ denotes the number of photons in the $i$-th mode. The creation and annihilation operators then act as follows on these states:
\begin{align}
a_i^{\dagger}\ket{(n_1,\cdots, n_i, \cdots, n_d)}& = \sqrt{1+n_i} \ket{(n_1,\cdots, n_i+1, \cdots, n_d)},\nonumber \\
a_i\ket{(n_1,\cdots, n_i, \cdots, n_d)} &= \sqrt{n_i} \ket{(n_1,\cdots, n_i-1, \cdots, n_d)}. \nonumber
\end{align}

%
%
%
%
The input state we consider in this work has the form $\ket{\psi}_{AE} = \sum_{i=1}^d \beta_i a_i^{\dagger} \ket{\text{vac}}_A \ket{\text{vac}}_E$, that is, Alice's input state is a single-excitation qudit. It is for this reason that we refer to the channel of interest as a qudit ($d$-dimensional) Unruh channel, even though the output state is infinite dimensional. As shown in~\cite{Bradler_Rindler}, the following identity holds for an input of the above form:
\begin{align}
\ket{\sigma}_{BE} &= \mathcal{U} \ket{\psi}_{AE} = \mathcal{U} \sum_{i=1}^d \beta_i \ket{(0, \cdots, 0,1_i, 0, \cdots, 0)}_A \ket{\text{vac}}_E \nonumber \\
& = \dfrac{1}{\cosh^{d+1}{r}} \sum_{k=1}^{\infty} \tanh^{k-1}{r} \sum_{N(k-1)} \sum_{i=1}^d \sqrt{1+n_i} \beta_i \ket{(n_1, \cdots, 1+n_i, \cdots, n_d)}_B  \ket{(n_1,\cdots, n_d)}_E,
\end{align}
%
%
%
%
where $N(k) = \{ n_i, \text{ }1 \le i \le d \text{ } | \sum_{i=1}^d n_i = k \}$ and the states $\ket{.}_B$ and $\ket{.}_E$ are expressed in a multi-rail basis. The set of states that fall into $N(k)$ is the set of $d$-mode states with a total number of $k$~photons. One should note that after the action of $\mathcal{U}$ we have relabelled the $A$ subsystem as the $B$ subsystem as this is the state Bob receives.

Thus if one inputs a pure state of the form $\psi_{AE} = \ketbra{\psi}{\psi}$ then the output of the qudit Unruh channel has the following form:
\begin{align}
\label{eq:Unruh_blocks}
\sigma_B = (1-z)^{d+1} \bigoplus_{k=1}^{\infty} z^{k-1} \sigma_B^{(k)},
\end{align}
where $z=\tanh^2{r}$ and $r$ is the acceleration parameter. The blocks in this infinite-dimensional matrix are given by
\begin{align}
\sigma_B^{(k)}=\sum_{N(k-1)}  \Big(  \sum_{i=1}^d \beta_i \sqrt{1+n_i} & \ket{(n_1, \cdots, 1+n_i, \cdots, n_d)}_B\Big) \Big( \sum_{j=1}^d \overline{\beta}_j \sqrt{1+n_j}  \bra{(n_1, \cdots, 1+n_j, \cdots, n_d)}_B \Big) 
\end{align}
%
%
%
%
where $N(k)$ is again defined as above, reflecting that the $k$-th block of Bob's state represents states with $k$~photons. The dimension of the $k$-th block is $ {k+d-1 \choose d-1}$.

Finally, the complementary channel to the Unruh channel $\mathcal{N^C}$ is calculated by tracing out Bob's subsystem rather than Eve's:
\begin{align}
\mathcal{N^C}^{A \rightarrow E}(\rho_A) = \Tr_B (\mathcal{U} \rho_{A}\otimes \ketbra{\text{vac}}{\text{vac}}_E \mathcal{U}^{\dagger}).
\end{align}

\subsection{Qudit cloning channels}
While the \textit{No-cloning Theorem} states that there is no unitary transformation that can copy an arbitrary quantum state~\cite{No-cloning}, there is a notion of the best possible approximate copy of a qudit. The unitary transformation that performs this approximate copy is referred to as a \textit{universal qudit cloning machine}~\cite{Buzek, Gisin, Werner, Keyl}. We define the universal cloning channel as a completely positive map given by tracing over the auxiliary output of the universal cloning machine. As shall be shown in Section~\ref{sec:hadamard}, these universal qudit cloners are the building blocks for the qudit Unruh channel and will be used extensively in the calculation of the capacity regions in Section~\ref{sec:capacity_regions}.

%
%
%
%
\begin{definition}[\cite{Fan}]
\label{def:cloner}
Let $\ket{\psi}$ be an arbitrary qudit $\ket{\psi} = \sum_{i=1}^d x_i \ket{i}$ such that $\sum_{i=1}^d |x_i|^2 = 1$. The $N$ to $M$ universal qudit cloner, $M\ge N$, is a unitary transformation that takes an input of the form
\begin{align}
\ket{\psi}^{\otimes N} \otimes R = \sum_{\vec{n}} \sqrt{ \dfrac{N!}{n_1! \cdots n_d!} } x_1^{n_1} \cdots x_d^{n_d} \ket{\vec{n}} \otimes R,
\end{align}
where the sum $\sum_{\vec{n}}$ is the sum over vectors $\ket{\vec{n}} = \ket{n_1,\cdots, n_d}$ which are completely symmetric normalized states with $n_i$ systems in state $\ket{i}$ such that $\sum_{i=1}^d n_i = N$. $R$ represents the state of an auxiliary subsystem. The unitary cloning operation $\mathcal{U}_{NM}$ is given by the following transformation
\begin{align}
\label{eq:cloner_output}
\mathcal{U}_{NM}\ket{\vec{n}} \otimes R = \sum_{\vec{j}} \alpha_{\vec{n} \vec{j} } \ket{ \vec{n} + \vec{j} } \otimes R_{\vec{j}},
\end{align}
where the sum $\sum_{\vec{j}}$ is over all $\vec{j}$ such that $\sum_{i=1}^d j_i = M-N$ and the states  $R_{\vec{j}}$ of the auxiliary subsystem are orthogonal. Finally the coefficients $\alpha_{\vec{n} \vec{j} }$ are given by
\begin{align}
\label{eq:cloner_eigs}
\alpha_{\vec{n} \vec{j} } = \sqrt{ \dfrac{(M-N)! (N+d-1)!}{(M+d-1)!} } \sqrt{ \prod_{i=1}^d \dfrac{(n_i + j_i)!} {n_i! j_i!} } .
\end{align}
\end{definition}

\subsection{Classical and Quantum Information Quantities}
\label{sec:capacities}
In this section we review some of the important information-theoretic quantities for the study of trade-offs between different protocols.

\subsubsection{Classically-Enhanced Quantum Capacity Region}
The classically-enhanced quantum capacity region is the region of achievable rate pairs that characterize the ability of a noisy quantum channel $\mathcal{N}$ to communicate classical ($C$) and quantum ($Q$) information. Devetak and Shor showed that the capacity region for a noisy quantum channel $\mathcal{N}$ is as follows~\cite{Devetak_CQ}:
\begin{align}
\mathcal{C}_{CQ}(\mathcal{N}) \equiv \overline{ \bigcup_{k=1}^{\infty} \dfrac{1}{k} \mathcal{C}_{CQ}^{(1)}(\mathcal{N}^{\otimes k}) },
\label{eq:multiletterCQ}
\end{align}
where the so-called ``one-shot" region $ \mathcal{C}_{CQ}^{(1)}(\mathcal{N}) $ is defined by
\begin{align}
\mathcal{C}_{\text{CQ}}^{(1)}(\mathcal{N})\equiv\bigcup_{\rho}\mathcal{C}_{\text{CQ,}\rho}^{(1)}(\mathcal{N}).\nonumber
\end{align}
The ``one-shot, one-state" region $\mathcal{C}_{\text{CQ,}\rho}^{(1)}(\mathcal{N})$ is the set of all $C,Q \ge 0$ that satisfy the following inequalities:
\begin{align}
C  &  \leq I(X;B)_{\rho},\nonumber\\
Q  &  \leq I(A\rangle BX)_{\rho},\nonumber
\end{align}
where the state $\rho$ has the form
\begin{align}
\rho^{XABE}\equiv\sum_{x}p_{X}\left(  x\right)  \left\vert x\right\rangle
\left\langle x\right\vert ^{X}\otimes U_{\mathcal{N}}^{A^{\prime}\rightarrow
BE}(\phi_{x}^{AA^{\prime}})
\label{eq:CEQ-state}%
\end{align}
and the states $\phi_x^{AA'}$ are pure.

\subsubsection{Entanglement-Assisted Classical Capacity Region}
The entanglement-assisted classical capacity region characterizes the ability for a noisy quantum channel $\mathcal{N}$ to transmit classical information with the help of noiseless quantum entanglement. The capacity region is the set of all achievable pairs ($C$,$E$)  where $C$ is the rate of classical communication and $E$ is the rate of entanglement consumption. Shor showed that this region is characterized by the following~\cite{Shor_CE}:

\begin{align}
\mathcal{C}_{CE}(\mathcal{N}) \equiv \overline{ \bigcup_{k=1}^{\infty} \dfrac{1}{k} \mathcal{C}_{CE}^{(1)}(\mathcal{N}^{\otimes k}) }
\label{eq:multiletterCE}
\end{align}
where the ``one-shot" region $\mathcal{C}_{CE}^{(1)}(\mathcal{N})$ is defined in a similar way as $\mathcal{C}_{CQ}^{(1)}(\mathcal{N})$. The union is now taken over the set of ``one-shot, one-state" regions $\mathcal{C}_{CE, \rho}^{(1)}(\mathcal{N})$, that are defined by the inequalities
\begin{align}
C &\le I(AX;B)_{\rho}, \\
E &\ge H(A|X)_{\rho},
\end{align}
and the state $\rho$ is defined as in~\eqref{eq:CEQ-state}.

\subsubsection{Quantum Dynamic Capacity Region}
Finally, the capacity region for entanglement-assisted transmission of quantum and classical information (CQE) is characterized by the following region~\cite{Hsieh_CQE}:
\begin{align}
\mathcal{C}_{CQE}(\mathcal{N}) \equiv \overline{ \bigcup_{k=1}^{\infty} \dfrac{1}{k} \mathcal{C}_{CQE}^{(1)}(\mathcal{N}^{\otimes k}) },
\label{eq:multiletterCQE}
\end{align}
where again the ``one-shot" region $\mathcal{C}_{CQE}^{(1)}(\mathcal{N}) $
is defined to be the union of ``one-shot, one-state" regions $\mathcal{C}_{CQE,\rho}^{(1)}(\mathcal{N}) $, defined by the following inequalities:
\begin{align}
C + 2Q &\le I(AX;B)_{\rho}, \\
Q + E &\le I(A \rangle BX)_{\rho}, \\
C+Q +E &\le I(X;B)_{\rho} + I(A \rangle BX)_{\rho},
\end{align}
where again the state $\rho$ is defined as in~\eqref{eq:CEQ-state}.

\subsection{Private Dynamic Capacity Region}
\label{sec:PDC_region}
In the previous section we reviewed the capabilities of a noisy quantum channel to transmit classical information and quantum information at the expense of noiseless entanglement. However, one may also study the trade-off of the classical resources of noiseless public communication, private communication, and secret key. Wilde and Hsieh provided the following characterization of the private dynamic capacity region, $\mathcal{C}_{RPS}(\mathcal{N})$, of a quantum channel $\mathcal{N}$~\cite{Wilde_PrivateQuantumTO}:

\begin{align}
\mathcal{C}_{RPS}(\mathcal{N}) \equiv \overline{ \bigcup_{k=1}^{\infty} \dfrac{1}{k} \mathcal{C}_{RPS}^{(1)}(\mathcal{N}^{\otimes k}) }
\label{eq:multiletterRPS}
\end{align}
where the ``one-shot" region $\mathcal{C}_{RPS}^{(1)}(\mathcal{N}) $
is the union of ``one-shot, one-state" regions $\mathcal{C}_{RPS,\rho}^{(1)}(\mathcal{N}) $:
\begin{align}
\mathcal{C}_{RPS}^{(1)}(\mathcal{N}) = \bigcup_{\rho} \mathcal{C}_{RPS, \rho}^{(1)}(\mathcal{N}).
\end{align}
The ``one-shot, one state" region $\mathcal{C}_{RPS, \rho}^{(1)}(\mathcal{N})$ is defined by a set of inequalities over the rates of public classical communication $R$, private classical communication $P$, and secret key generation $S$:
\begin{align}
R + P&\le I(YX;B)_{\rho}, \\
P + S &\le I(Y; B|X)_{\rho} - I(Y;E|X)_{\rho}, \\
R+P +S &\le I(YX;B)_{\rho} - I(Y;E|X)_{\rho},
\end{align}
where the state $\rho$ is defined as
\begin{align}
\rho^{XYBE} = \sum_{x,y} p_{X,Y} (x,y) \ketbra{x}{x}^X \otimes \ketbra{y}{y}^Y \otimes \mathcal{U}_{\mathcal{N}}^{A' \rightarrow BE} (\rho_{x,y}^{A'}).
\end{align}

\section{The Hadamard class of quantum channels}
\label{sec:hadamard}
The calculation of the information capacity regions outlined in Section~\ref{sec:capacities} can be difficult because they generally require taking a union over an infinite number of uses of the channel. However, there are certain  classes of channels such that the full capacity region of entanglement-assisted transmission of quantum and classical information ``single-letterizes", that is, $\mathcal{C}_{CQE}(\mathcal{N}) = \mathcal{C}_{CQE}^{(1)} (\mathcal{N})$. Among these is the class of Hadamard channels~\cite{Bradler_Hadamard}.
\begin{definition}
A noisy quantum channel $\mathcal{N}^{A' \rightarrow B}$ is a Hadamard channel if its complementary channel $\mathcal{\left(N^{C}\right)}^{A' \rightarrow E}$ is entanglement breaking~\cite{King_EB, King_Hadamard}. That is, the action of the complementary channel on an entangled state $\ket{\psi}^{AA'} = \dfrac{1}{\sqrt{d}} \sum_{i=1}^d \ket{i}^A \ket{i}^{A'}$ is as follows:
\begin{align}
\mathcal{\left(N^C\right)}^{A' \rightarrow E} (\ketbra{\psi}{\psi}^{AA'}) = \sum_{x \in X} p_X (x) \rho_{x}^A \otimes \sigma_x^E.
\end{align}
\end{definition}

In order to show that the Unruh channel belongs to the class of Hadamard channels, we first show that each of its blocks in~\eqref{eq:Unruh_blocks} are universal qudit cloning channels. In Section~\ref{sec:capacity_regions}, we shall be calculating the information capacities of the universal qudit cloners and then using the knowledge of the following lemma to simplify the calculation of the capacity region of the qudit Unruh channel.

%
%
%
%
\begin{definition}[\cite{Bradler_Rindler}]
Let $G$ be a group, $\mathcal{H}_{in}$, $\mathcal{H}_{out}$ be Hilbert spaces and let $r_1~:~G \rightarrow GL(\mathcal{H}_{in})$, $r_2~:~G \rightarrow GL(\mathcal{H}_{out})$ be unitary representations of the group, where $GL$ stands for general linear representation. Let $\mathcal{K} : \mathcal{DM}(\mathcal{H}_{in}) \rightarrow \mathcal{DM}(\mathcal{H}_{out})$ be a channel, where $\mathcal{DM}$ stands for the space of density matricies. The channel $\mathcal{K}$ is defined to be covariant with respect to $G$, if
\begin{align}
\mathcal{K} \big( r_1(g) \rho r_1(g)^{\dagger} \big) = r_2(g) \mathcal{K}(\rho) r_2(g)^{\dagger}
\end{align}
holds for all $g \in G$,  $\rho \in \mathcal{DM}(\mathcal{H}_{in})$.
\end{definition}

\begin{remark}
As shown in Ref.~\cite{Bradler_Rindler}, the qudit Unruh channel is $SU(d)$-covariant. Thus, a unitary transformation upon a given input state will result in an output state that is equivalent to the original output state of the qudit Unruh channel up to a unitary transformation.
\end{remark}

\begin{lemma}
\label{lem:blockcloner}
The $k$-th block of the qudit Unruh channel, after normalization of the block, is equivalent to a $1 \rightarrow k$ universal qudit cloner.
\end{lemma}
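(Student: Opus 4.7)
The plan is a direct computation: apply the $1\to k$ universal qudit cloner to the same one-parameter family of pure inputs that appears in the derivation of~\eqref{eq:Unruh_blocks}, trace out the cloner's auxiliary output, and match the resulting state against $\sigma_B^{(k)}$ term-by-term.

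First I would fix an arbitrary single-qudit pure input $\ket{\psi} = \sum_{i=1}^d \beta_i \ket{i}$. By linearity of both the normalized $k$-th block of the Unruh channel and the cloning channel, and by the $SU(d)$-covariance of the Unruh channel noted in the preceding remark, it suffices to verify the equality on such pure inputs. In the notation of Definition~\ref{def:cloner} with $N=1$, the expansion $\sum_{\vec n}\sqrt{1!/(n_1!\cdots n_d!)}\,x_1^{n_1}\cdots x_d^{n_d}\ket{\vec n}$ collapses to $\sum_i \beta_i \ket{1_i}$, where $1_i$ denotes the vector with $1$ in slot $i$ and $0$ elsewhere.

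Applying the cloner to each basis vector $\ket{1_i}$ via~\eqref{eq:cloner_output} with $M=k$, and tracing out $R$ using orthogonality of the $R_{\vec j}$, gives
\begin{align}
\sum_{\vec j\,:\,\sum_m j_m = k-1} \Big(\sum_i \beta_i\,\alpha_{1_i,\vec j}\ket{1_i+\vec j}\Big)\Big(\sum_{i'} \overline{\beta}_{i'}\,\alpha_{1_{i'},\vec j}\bra{1_{i'}+\vec j}\Big). \nonumber
\end{align}
The combinatorial heart of the argument is to substitute $\vec n = 1_i$ into~\eqref{eq:cloner_eigs}: the product $\prod_m (n_m+j_m)!/(n_m!\,j_m!)$ collapses to $1+j_i$, giving
\begin{align}
\alpha_{1_i,\vec j} \;=\; \sqrt{\frac{(k-1)!\,d!}{(k+d-1)!}}\,\sqrt{1+j_i}. \nonumber
\end{align}

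After relabeling $\vec j = (n_1,\dots,n_d)\in N(k-1)$ and recognizing $\ket{1_i + \vec j}$ as $\ket{(n_1,\dots,1+n_i,\dots,n_d)}$, the reduced state coincides with $\sigma_B^{(k)}$ up to the overall scalar $(k-1)!\,d!/(k+d-1)!$. Since the cloner output is manifestly a unit-trace density operator, this scalar must equal $1/\Tr\,\sigma_B^{(k)}$, so dividing $\sigma_B^{(k)}$ by its trace produces the cloner output exactly, proving the lemma. The only substantive obstacle is the combinatorial bookkeeping---verifying that the cloner coefficients $\alpha_{1_i,\vec j}$ reproduce the $\sqrt{1+n_i}$ weights in $\sigma_B^{(k)}$ and that the summation index set is $N(k-1)$---but once the reduction $\vec n = 1_i$ in~\eqref{eq:cloner_eigs} is carried out, the match is immediate, and arithmetic with the binomial identity $\binom{k+d-2}{d-1}(k+d-1)/d = (k+d-1)!/((k-1)!\,d!)$ confirms that the prefactor is indeed the normalizing constant.
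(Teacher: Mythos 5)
Your proof is correct, but it takes a genuinely different route from the paper's. The paper fixes the single input $\ket{1}$ (justified by $SU(d)$-covariance of both channels), computes for that input the eigenvalues $p/M$, $1\le p\le k$, with $M=\binom{k+d-1}{d}$, and their multiplicities $\binom{(k-p)+d-2}{d-2}$, for both the normalized Unruh block and the $1\to k$ cloner output, observes that the two spectra coincide, and then relies on the follow-up remark that two such channels with identical diagonal output spectra must have identical Kraus operators. You instead compute the cloner's action on an \emph{arbitrary} pure input $\sum_i\beta_i\ket{i}$: with $N=1$, $M=k$ the coefficients in~\eqref{eq:cloner_eigs} collapse to $\alpha_{1_i,\vec{j}}=\sqrt{(k-1)!\,d!/(k+d-1)!}\,\sqrt{1+j_i}$, and after tracing the ancilla (using orthogonality of the $R_{\vec{j}}$) the output is literally the expression defining $\sigma_B^{(k)}$ with $\vec{j}$ playing the role of $(n_1,\dots,n_d)\in N(k-1)$, up to the scalar $(k-1)!\,d!/(k+d-1)!=1/\binom{k+d-1}{d}=1/\Tr\,\sigma_B^{(k)}$. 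This buys you something the paper's argument obtains only indirectly: operator equality of the two outputs on every pure input, hence equality of the two CPTP maps by linearity, with no need for the spectral-fingerprint reasoning in the paper's remark (the least rigorous step of the published proof). The paper's approach is in turn computationally lighter---a diagonal spectrum for one input---and it exposes the multiplicity data $m_p$ and normalization $M$ that are reused later in the capacity calculations. One small correction to your setup: the reduction to pure inputs needs only linearity and the fact that the block's trace is input-independent, not $SU(d)$-covariance; covariance is what the \emph{paper} needs precisely because it checks a single input, whereas you check them all.
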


\begin{proof}
We shall show that both channels share the same eigenvalues with the same multiplicity for each eigenvalue. Consider a particular block of the Unruh channel ($d,k$). This block corresponds to
\begin{align}
\label{eq:output_block}
\ket{\sigma^{(k)} }_{BE}= \dfrac{1}{\sqrt{M}} \sum_{N(k-1)} \sum_{i=1}^d \sqrt{1+n_i} \beta_i \ket{(n_1, \cdots, 1+n_i, \cdots, n_d)}_B \ket{(n_1,\cdots, n_d)}_E,
\end{align}
where $M$ is the normalization coefficient for this block, calculated below, and the same convention as before is used for $N(k)$. 
%
%
%
%
Since the qudit Unruh channel is $SU(d)$-covariant, any choice of a input qudit pure state will produce the same eigenvalues for the resulting output density matrix. In this case, it will be convenient to fix the input state to be equal to $\ket{1}$, thus $\beta_1=1$ and $\beta_{j\ge2}=0$. Therefore, in our case, the form of the output density operator is as follows:
\begin{align}
\sigma_B^{(k)}&=\dfrac{1}{M} \sum_{N(k)} \sum_{i=1}^d |\beta_i|^2 n_i \ketbra{(n_1, \cdots, n_d)}{(n_1, \cdots, n_d)}_B \\
&=\dfrac{1}{M} \sum_{N(k)}  n_1 \ketbra{(n_1,\cdots, n_d)}{(n_1,\cdots,n_d)}_B
\end{align}
where $M$ is the normalization factor. Since we are considering the block with fixed $(d,k)$, the eigenvalues of the output density operator for Bob are $1 \le n_1 \le k$. Now we count the multiplicities of each eigenvalue. If we fix $n_1=p$ then we have the constraint
\begin{align}
\label{eq:constraint_ns}
\sum_{i=2}^d n_i = k-p
\end{align}
since the sum of all the indices must equal $k$. Now we are free to choose the $n_i$'s such that \eqref{eq:constraint_ns} holds. The multiplicity is then just the number of ways one can choose $n_2, \cdots,n_d$ such that \eqref{eq:constraint_ns} holds, which is equal to
\begin{align}
\label{eq:multiplicity_unruh}
m_p= {(k-p)+d-2 \choose d-2}.
\end{align}
Finally, normalizing the output density operator to have trace one, we have to normalize the eigenvalues by
\begin{align}
\label{eq:normalization_unruh}
M=\sum_{p=1}^d p m_p = { k+d-1 \choose d }.
\end{align}

%
%
%
%
Now consider the eigenvalues for the $1\rightarrow k$ universal qudit cloner in $d$-dimensions. They are given by $\alpha_{\vec{n},\vec{j}}^2$ in~\eqref{eq:cloner_eigs}. By construction~\cite{Gisin, Werner, Keyl}, the universal qudit cloner is a covariant channel, and therefore we can consider an arbitrary input qudit pure state, which we shall choose to be $\ketbra{1}{1}_A=\ketbra{(1,0,\cdots,0)}{(1,0,\cdots,0)}$. Thus now the eigenvalues take the following form~\cite{Fan}:
\begin{align}
\alpha_{1,\vec{j}}^2=\dfrac{1}{ { k+d-1 \choose d }} (1+j_1).
\end{align}
Since the qudit cloner must satisfy the condition $\sum_{i=1}^d n_i+j_i = k$, we must have $0 \le j_1 \le k-1$, or, $ \alpha_{1,\vec{j}}^2 = \dfrac{i}{M},$ $1 \le i \le k$ where $M$ is equal to the normalization factor found in~\eqref{eq:normalization_unruh}. Now we consider the multiplicity of each eigenvalue. Say we consider a particular eigenvalue where $\alpha_{1, \vec{j'}}=\dfrac{b}{M}$. Then $j'_1=b-1$ and the remaining indices are chosen such that $\sum_{i=2}^d j'_i = k-b$. Thus considering all possibilities for the remaining indices of the vector $j'$, the multiplicity of such an eigenvalue is
\begin{align}
m_b = { (k-b)+d-2 \choose d-2}
\end{align}
which agrees with the multiplicity found in~\eqref{eq:multiplicity_unruh}.
\end{proof}

%
%
%
%
\begin{remark}
It is worth clarifying that the states in~\eqref{eq:output_block} are in fact a representation of completely symmetric states as defined in Definition~\ref{def:cloner} and required according to~\eqref{eq:cloner_output}. These physical photonic Fock states can be thought of as completely symmetric since there always exists an isomorphism between the 
bosonic operator representation as defined and studied in~\cite{Bradler_Rindler} and the subspace of completely symmetric states. Thus, such a mathematical equivalence between subspaces is sufficient to claim that the states in~\eqref{eq:output_block} are indeed a representation of completely symmetric states. Bob's output state will be a mixture of completely symmetric states, but the overall state itself will not be completely symmetric. The most important aspect of the proof is to remark that the output states of both the $1 \rightarrow k$ qudit cloning channel and the $k$-th block of the qudit Unruh channel are both diagonal with the same eigenvalues and multiplicities. Thus, the resulting Kraus operators from both channels must be identical and therefore, by definition, these channels represent the same completely positive trace-preserving (CPTP) map.
\end{remark}

We now proceed to showing that the qudit Unruh channel is among the class of Hadamard channels for all $d$. This key result will enable the calculation of the full capacity region in Section~\ref{sec:capacity_regions} due to the ``single-letterization" of the quantum dynamic capacity formula~\cite{Wilde_QDC} for Hadamard channels. Theorem~\ref{thm:Unruh_Hadamard} below is a consequence of Theorem~\ref{thm:Horodecki} characterizing entanglement-breaking channels and Lemma~\ref{lem:entanglementbreaking} that explicitly constructs rank-one Kraus operators for complementary maps of all $1\to 2$ qudit cloners. The proof of Lemma~\ref{lem:entanglementbreaking} can be found in Appendix~A 
There is an alternative method to showing the Unruh channel is Hadamard based on results from Vollbrecht and Werner~\cite{Vollbrecht} by investigating the properties of the Jamio\l kowski representation of complementary channels to optimal $1\to2$ qudit cloners. The difference in the proof given in this work is that it is constructive and explicitly shows the entanglement-breaking character of the complementary channels.

\begin{theorem}[\cite{Horodecki}]
\label{thm:Horodecki}
A noisy quantum channel $\mathcal{N}$ is entanglement-breaking if and only if it can be written as a sum of rank-one Kraus operators, that is
\begin{align}
\mathcal{N}(\rho) &= \sum_i A_i \rho A_i^{\dagger} \\
I &= \sum_i A_i^{\dagger} A_i
\end{align}
where each $A_i$ is a rank-one operator.
\end{theorem}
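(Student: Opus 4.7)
The plan is to prove this well-known characterization (due to Horodecki–Shor–Ruskai) by establishing each direction of the biconditional separately: the ``rank-one Kraus implies entanglement-breaking'' direction by a direct algebraic computation, and the converse via the Choi–Jamio\l kowski correspondence applied to the already-known separability of the channel's Choi state.

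For the easy direction, I would write each rank-one Kraus operator in the form $A_i = \ket{u_i}\bra{v_i}$ and compute
\begin{align}
\mathcal{N}(\rho) = \sum_i \ket{u_i}\bra{v_i}\rho\ket{v_i}\bra{u_i} = \sum_i \bra{v_i}\rho\ket{v_i}\,\ketbra{u_i}{u_i}, \nonumber
\end{align}
which is manifestly of measure-and-prepare (Holevo) form with POVM elements $M_i = \ketbra{v_i}{v_i}$ (whose completeness follows from the trace-preserving identity $\sum_i A_i^\dagger A_i = I$) and prepared pure output states $\ketbra{u_i}{u_i}$. Applied to one share of any bipartite state, such a map produces a convex combination of product states, and is therefore entanglement-breaking by definition.

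For the converse, suppose $\mathcal{N}^{A'\to B}$ is entanglement-breaking. Then its Choi state $J_{\mathcal{N}} = (\text{id}_A \otimes \mathcal{N}^{A'\to B})(\ketbra{\Phi^+}{\Phi^+})$, obtained by feeding one half of the maximally entangled state $\ket{\Phi^+} = \frac{1}{\sqrt{d}}\sum_i \ket{i}^A\ket{i}^{A'}$ through the channel, is separable by hypothesis. I would fix any product-state decomposition $J_{\mathcal{N}} = \sum_k \ketbra{\alpha_k}{\alpha_k}^A \otimes \ketbra{\beta_k}{\beta_k}^B$ (with probability weights absorbed into the norms of the vectors) and then invoke the Choi–Jamio\l kowski isomorphism to read off Kraus operators. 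A direct calculation verifies that the rank-one operators $A_k \equiv \sqrt{d}\,\ket{\beta_k}\bra{\overline{\alpha_k}}$ reproduce the correct action $(\text{id}\otimes A_k)\ketbra{\Phi^+}{\Phi^+}(\text{id}\otimes A_k^\dagger) = \ketbra{\alpha_k}{\alpha_k}\otimes\ketbra{\beta_k}{\beta_k}$, and hence $\mathcal{N}(\rho) = \sum_k A_k\rho A_k^\dagger$ by linearity and by the fact that the Choi state determines the channel uniquely. The identity $\sum_k A_k^\dagger A_k = I$ then follows from the fact that the reduced state $\Tr_B J_{\mathcal{N}} = I/d$ by trace preservation of $\mathcal{N}$.

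The main obstacle is purely bookkeeping: keeping straight the complex conjugation (the $\ket{\overline{\alpha_k}}$ rather than $\ket{\alpha_k}$) and the factor of $\sqrt{d}$ that arise from the Choi–Jamio\l kowski dictionary, and acknowledging that the product-state decomposition of a separable state is highly non-unique but any choice works. Neither concern presents a genuine conceptual difficulty, so the proof reduces to the two computations above once the correspondence between separable Choi states and rank-one Kraus decompositions is set up carefully.
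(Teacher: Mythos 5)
Your proposal is correct, but there is nothing in the paper to compare it against: Theorem~\ref{thm:Horodecki} is stated with a citation to Horodecki, Shor, and Ruskai and is used as a black box (in combination with Lemma~\ref{lem:entanglementbreaking}) to prove Theorem~\ref{thm:Unruh_Hadamard}; the paper never proves it. What you have written is essentially the standard argument from that cited reference, and both directions check out. In the forward direction, writing $A_i = \ketbra{u_i}{v_i}$ exhibits the channel as measure-and-prepare, and $(\text{id}\otimes A_i)\rho_{AB}(\text{id}\otimes A_i^{\dagger})$ is a positive operator on $A$ tensored with $\ketbra{u_i}{u_i}$, so the output on any bipartite input is separable. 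In the converse, the decomposition of the separable Choi state into pure product terms and the dictionary $A_k = \sqrt{d}\,\ket{\beta_k}\bra{\overline{\alpha_k}}$ indeed satisfy $(\text{id}\otimes A_k)\ket{\Phi^+} = \ket{\alpha_k}\otimes\ket{\beta_k}$, and the identity $\sum_k A_k^{\dagger}A_k = I$ follows from $\Tr_B J_{\mathcal{N}} = I/d$ together with the fact that complex conjugation fixes the identity matrix. Two minor bookkeeping points, neither a gap: first, with $\ket{u_i}$ normalized the completeness relation reads $\sum_i \ketbra{v_i}{v_i} = I$ with the weights absorbed into the unnormalized vectors $\ket{v_i}$, which is what makes your POVM claim precise; second, your forward direction proves separability of the output for \emph{every} bipartite input, while the paper's definition of entanglement breaking (and your converse) tests only the maximally entangled input, so your argument also silently establishes the equivalence of those two formulations, which is itself part of the content of the cited theorem.
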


\begin{lemma}\label{lem:entanglementbreaking}
The following operators form a set of rank-one Kraus operators for the first block of the complementary channel of the $1 \rightarrow 2$ universal qudit cloner $\mathcal{S}_2^{(d)}$, in $d$-dimensions,
\begin{align*}
&\dfrac{1}{\sqrt{d+1}}\ketbra{1}{1},\cdots,\dfrac{1}{\sqrt{d+1}}\ketbra{d}{d}, \dfrac{1}{\sqrt{4^{d-1}(d+1)}}\ketbra{\psi(\vec{\textbf{n}})}{\psi(\vec{\textbf{n}})}\sigma_z(\vec{\textbf{n}}), \\
 \text{where } \ket{\psi(\vec{\textbf{n}})}&=\sum_{j=1}^d i^{n_j}\ket{j}, \text{ } \sigma_z(\vec{\textbf{n}})=\sum_{j=1}^d (-1)^{n_j}\ketbra{j}{j},  \text{ }\vec{\textbf{n}}=(n_1,\cdots,n_d),\text{ } n_1=0,\text{ }n_{j\ge2} \in \{0,1,2,3\} .
\end{align*}
\end{lemma}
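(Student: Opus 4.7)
The plan is to prove the lemma by direct verification, splitting into two independent claims: first, that the given operators satisfy the completeness relation $\sum_i A_i^\dagger A_i = I$; and second, that they implement the same CPTP map as the first block of the complementary channel of the $1\to 2$ universal qudit cloner. Both claims hinge on the character-orthogonality identity $\sum_{n\in\{0,1,2,3\}} i^{cn} = 4\,\delta_{c \equiv 0 \bmod 4}$, applied mode-by-mode across $n_2,\ldots,n_d$.

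For completeness, the $d$ diagonal operators contribute $\frac{1}{d+1}\sum_i \ketbra{i}{i} = \frac{1}{d+1}I$. For the remaining $4^{d-1}$ operators, I would use $\braket{\psi(\vec{\textbf{n}})}{\psi(\vec{\textbf{n}})} = d$ to reduce $K_{\vec{\textbf{n}}}^\dagger K_{\vec{\textbf{n}}}$ to $\frac{d}{4^{d-1}(d+1)}\sigma_z(\vec{\textbf{n}})\ketbra{\psi(\vec{\textbf{n}})}{\psi(\vec{\textbf{n}})}\sigma_z(\vec{\textbf{n}})$. Writing out matrix elements in the form $\bra{j}\sigma_z(\vec{\textbf{n}})\ketbra{\psi(\vec{\textbf{n}})}{\psi(\vec{\textbf{n}})}\sigma_z(\vec{\textbf{n}})\ket{k} = (-i)^{n_j} i^{n_k}$ and summing over $\vec{\textbf{n}}$, the off-diagonal entries $(j\neq k)$ vanish by orthogonality while each diagonal entry picks up a factor $4^{d-1}$, yielding $\frac{d}{d+1}I$. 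Summing both contributions gives $I$, as required.

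For the map-equivalence claim, I would first extract the target channel directly from Definition~\ref{def:cloner}: for the $1\to 2$ cloner on input $\ket{i}$ the coefficients are $\alpha_{\vec{e}_i,\vec{e}_k}^2 = 2/(d+1)$ when $k=i$ and $1/(d+1)$ otherwise, and tracing out the symmetric two-photon output yields $\mathcal{N}^C(\rho) = \frac{1}{d+1}\bigl(\Tr(\rho)\,I + \rho^T\bigr)$. I would then evaluate $\sum_i K_i\rho K_i^\dagger + \sum_{\vec{\textbf{n}}} K_{\vec{\textbf{n}}}\rho K_{\vec{\textbf{n}}}^\dagger$ on $\rho = \sum_{\alpha\beta}\rho_{\alpha\beta}\ketbra{\alpha}{\beta}$. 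The non-diagonal piece becomes
\begin{align*}
\frac{1}{4^{d-1}(d+1)}\sum_{\alpha\beta\gamma\delta} \rho_{\alpha\beta}\Bigl(\sum_{\vec{\textbf{n}}} i^{n_\alpha - n_\beta + n_\gamma - n_\delta}\Bigr)\ketbra{\gamma}{\delta},
\end{align*}
and the $\vec{\textbf{n}}$-sum is nonzero (and equal to $4^{d-1}$) precisely when $\{\alpha,\gamma\} = \{\beta,\delta\}$ as multisets. This condition splits into an ``$\alpha=\beta,\gamma=\delta$'' case contributing $\Tr(\rho)\,I$ and a ``$\gamma=\beta\neq\alpha=\delta$'' case contributing $\rho^T$ minus its diagonal; the diagonal $K_i$ contribution exactly cancels that diagonal correction, matching the target $\frac{1}{d+1}(\Tr(\rho)\,I + \rho^T)$.

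The main obstacle is not the algebra but the bookkeeping in the orthogonality argument: one must verify that the index set $\vec{\textbf{n}} = (0, n_2, \ldots, n_d)$ with $n_{j\ge 2}\in\{0,1,2,3\}$ is exactly rich enough to enforce the multiset condition $\{\alpha,\gamma\} = \{\beta,\delta\}$ (since each exponent $n_\alpha - n_\beta + n_\gamma - n_\delta$ lies in $\{-2,\ldots,2\}$, only $0$ and not $\pm4$ survives), and to cleanly separate the resulting ``diagonal'' and ``swap'' contributions so they recombine into $\Tr(\rho)\,I + \rho^T$ after adding in the $K_i$ pieces. Once these character sums are handled, verifying the Kraus completeness and the map equality is mechanical.
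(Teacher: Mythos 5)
Your proposal is correct and takes essentially the same route as the paper's Appendix~A proof: both arguments directly verify the two Kraus conditions (completeness, and agreement of $\sum_i A_i\rho A_i^\dagger$ with the channel's action) using the mode-by-mode character identity that $\sum_{n=0}^{3} i^{cn}$ vanishes unless $c\equiv 0\pmod 4$, with your multiset/inclusion--exclusion bookkeeping being a tidier packaging of the paper's entry-by-entry cancellations. The one点 worth noting is that your extracted target map $\frac{1}{d+1}\left(\Tr(\rho)\,I+\rho^{T}\right)$ is exactly what the paper's own verification produces (off-diagonal entries $\beta_q\overline{\beta}_p/(d+1)$, i.e.\ the transpose, consistent with $\mathcal{S}_k^{(d)}(m_\alpha\lambda^{(1)}_\alpha)=\overline{m}_\alpha\lambda^{(k-1)}_\alpha$), even though the paper's displayed matrix in its proof writes the off-diagonals without the transpose; your version resolves that small internal inconsistency rather than introducing an error.
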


\begin{theorem}
\label{thm:Unruh_Hadamard}
The qudit Unruh channel is a Hadamard channel.
\end{theorem}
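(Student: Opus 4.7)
The plan is to prove the theorem by reducing entanglement-breaking of the complementary channel $\mathcal{N}^C$ to entanglement-breaking of the complement of each $1 \to k$ universal qudit cloner, and then invoking Lemma~\ref{lem:entanglementbreaking} and Theorem~\ref{thm:Horodecki}. The starting point is the block decomposition of $\mathcal{N}$ established in Lemma~\ref{lem:blockcloner}.

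First I would observe from~\eqref{eq:Unruh_blocks} and the expression for $\ket{\sigma}_{BE}$ that the Stinespring isometry $\mathcal{U}$ maps any pure input into a superposition of orthogonal sectors indexed by the total photon number $k$, with $B$ carrying $k$ photons and $E$ carrying $k-1$. Since these sectors are orthogonal on $E$ as well, tracing out $B$ exhibits $\mathcal{N}^C$ as a direct sum
\begin{equation*}
\mathcal{N}^C = \bigoplus_{k \ge 1} q_k\, \mathcal{C}_k^C,
\qquad q_k = (1-z)^{d+1} z^{k-1} \binom{k+d-1}{d},
\end{equation*}
where $\{q_k\}$ is a probability distribution and, by Lemma~\ref{lem:blockcloner}, $\mathcal{C}_k^C$ is the complementary channel of the $1 \to k$ universal qudit cloner. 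Any rank-one Kraus decomposition of $\mathcal{N}^C$ splits blockwise into rank-one Kraus decompositions of each $\mathcal{C}_k^C$ (rescaled by $\sqrt{q_k}$), and conversely such decompositions of the blocks concatenate into one for $\mathcal{N}^C$. Theorem~\ref{thm:Horodecki} therefore reduces the claim to showing that every $\mathcal{C}_k^C$ is entanglement-breaking.

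The $k = 1$ block is trivial, since the $1 \to 1$ cloner is the identity channel and its complement is the constant map onto the vacuum. The $k = 2$ block is handled directly by Lemma~\ref{lem:entanglementbreaking} together with Theorem~\ref{thm:Horodecki}. The genuine obstacle is the tail $k \ge 3$, which the remarks preceding the theorem promise should also reduce to the $1 \to 2$ case. I would attempt this reduction by exploiting the $SU(d)$-covariance of each cloner and the symmetric-subspace structure of its output: either (i) extending the combinatorial construction of Appendix~A inductively in $k$ to produce rank-one Kraus operators for $\mathcal{C}_k^C$ explicitly, or (ii) exhibiting a CPTP post-processing $\Phi_k$ on Eve's auxiliary system with $\mathcal{C}_k^C = \Phi_k \circ \mathcal{C}_2^C$, which would transport the entanglement-breaking property from $k = 2$ to all $k$ since post-composition with a CPTP map preserves the existence of a rank-one Kraus representation. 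Establishing such a factorization---equivalently, showing that Eve's side of the $1 \to k$ cloner isometry factors through Eve's side of the $1 \to 2$ cloner---is the technical kernel I would expect to be the hardest part of the argument.
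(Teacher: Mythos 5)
Your reduction has the right architecture, and it matches the paper's: the complement of the Unruh channel splits as a weighted direct sum over photon-number sectors, a weighted direct sum of entanglement-breaking channels is entanglement-breaking (the rank-one Kraus operators rescale and concatenate exactly as you say), and the $k=2$ block is settled by Lemma~\ref{lem:entanglementbreaking} together with Theorem~\ref{thm:Horodecki}. The genuine gap is that you stop at the step you yourself flag as the crux: nothing in your proposal actually establishes that the complement $\mathcal{S}_k^{(d)}$ of the $1\rightarrow k$ cloner is entanglement-breaking for $k\ge 3$. You offer two candidate strategies, (i) and (ii), but prove neither, so as written the theorem is not proved.

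What you did not notice is that your route (ii) is not the hard part at all --- it is essentially immediate from the covariance/generator formalism, and it is precisely how the paper closes the argument. Writing states in terms of sl($d,\mathbb{C}$) generators, the $1\rightarrow k$ cloner acts by $m_\alpha\lambda^{(1)}_\alpha \mapsto m_\alpha\lambda^{(k)}_\alpha$ and its complement by $m_\alpha\lambda^{(1)}_\alpha \mapsto \overline{m}_\alpha\lambda^{(k-1)}_\alpha$ (equations \eqref{gentrans} and \eqref{gentrans1}). Comparing these two actions yields exactly the factorization you conjectured, with $\Phi_k = Cl^{(d)}_{1\rightarrow (k-1)}$: the complement of the $1\rightarrow k$ cloner is the complement of the $1\rightarrow 2$ cloner (which conjugates the coefficients while leaving them on fundamental-representation generators) followed by the $1\rightarrow(k-1)$ cloning channel (which lifts the generators while keeping the coefficients), up to the identity/normalization components, which match automatically. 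Consequently $(\mathbb{I}\otimes\mathcal{S}^{(d)}_k)(\Phi^+) = \left(\mathbb{I}\otimes Cl^{(d)}_{1\rightarrow(k-1)}\right)\!\left[(\mathbb{I}\otimes\mathcal{S}^{(d)}_2)(\Phi^+)\right]$, and since a local CPTP map preserves separability, the separable decomposition obtained for $k=2$ lifts to every $k$: the lifted factors $\frac{1}{d}\bigl((k+d-1)\mathbb{I}+\sum_\alpha \overline{n}_{\alpha,l}\lambda^{(k-1)}_\alpha\bigr)$ are legitimate states because they are (up to normalization) cloner outputs of the $k=2$ factors. This is exactly the paper's passage from \eqref{eq:varsigma2} to \eqref{eq:varsigma3}, which it phrases more loosely as the coefficient mapping ``should exist for all $k$.'' So your proposal fails only in leaving as an open conjecture a step that the generator identities, already needed elsewhere in your argument, settle in one line.
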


\begin{proof}
We begin by considering the action of a $1 \rightarrow k$ universal qudit cloner acting on an arbitrary qudit state. As shown in~\cite{Bradler_Rindler}, if we write the input qudit state in terms of the generators of the sl($d,\mathbb{C}$) algebra, then the action of the universal qudit cloner is given by the following transformation:
\begin{equation}
    Cl_{1\rightarrow k}^{(d)}(\sigma_A^{(1)})=   Cl_{1\rightarrow k}^{(d)} \left( \dfrac{1}{d}\Big(\mathbb{I}+\sum_{\alpha=1}^{L}m_\alpha\lambda^{(1)}_\alpha\Big) \right)
   ={1\over d}\Big(k\mathbb{I}+\sum_{\alpha=1}^{L}m_\alpha\lambda^{(k)}_\alpha\Big) = \sigma_B^{(k)},
\end{equation}
where each $\lambda_\alpha^{(k)}$ is a generator of the $k^{\text{th}}$ completely symmetric representation of the sl($d, \mathbb{C}$) algebra. The action of the channel can easily be expressed~as

\begin{equation}\label{gentrans}
 Cl_{1\rightarrow k}^{(d)}(m_\alpha\lambda^{(1)}_\alpha) = m_\alpha\lambda^{(k)}_\alpha.
\end{equation}
The universal $1\to k$ qudit cloning channel is a unital channel that has the property that it maps generators of the fundamental representation to generators of higher dimensional completely symmetric representations~\cite{Bradler_Rindler}, in this case the $k^{\text{th}}$ completely symmetric representation.

Consider now the complementary channel $\mathcal{S}_k^{(d)}$ of the $ 1 \rightarrow k$ universal qudit cloner. The action of the channel is described in terms of these generators as follows~\cite{Bradler_Rindler}:
\begin{equation}
    \mathcal{S}_k^{(d)}(\sigma_A^{(1)})= \mathcal{S}_k^{(d)} \left( {1\over d}\Big(\mathbb{I}+\sum_{\alpha=1}^{L}m_\alpha\lambda^{(1)}_\alpha\Big) \right)
    ={1\over d}\Big((k-1)\mathbb{I}+\sum_{\alpha=1}^{L} \overline{ m}_\alpha \lambda^{(k-1)}_\alpha\Big)+\mathbb{I} = \sigma_E^{(k)}
\end{equation}
since
\begin{equation}\label{gentrans1}
\mathcal{S}_k^{(d)}(m_\alpha\lambda^{(1)}_\alpha) =  \overline{ m}_\alpha \lambda^{(k-1)}_\alpha,
\end{equation}
where $\overline{m}_{\alpha}$ denotes the complex conjugate of the coefficient $m_{\alpha}$. We now proceed to showing that $\mathcal{S}_k^{(d)}$ is entanglement-breaking for all $k$, that is
$\varsigma_{BE}=(\mathbb{I} \otimes\ \mathcal{S}_k^{(d)})(\Phi^+)$
is separable. Expressing $\varsigma_{BE}$ in terms of fundamental and higher-dimensional generators, we obtain the following expression:
\begin{align}
\label{eq:varsigma2}
\varsigma_{BE}=\sum_{i=1}^d \sum_{j=1}^d \ketbra{i}{j} \otimes \mathcal{S}_k^{(d)}(\ketbra{i}{j})
&= \sum_{i=1}^d \sum_{j=1}^d  \ketbra{i}{j} \otimes  \mathcal{S}_k^{(d)} \left( \dfrac{1}{d} ( \mathbb{I} + \sum_{\alpha=1}^L m_{\alpha, ij} \lambda_{\alpha}^{(1)}) \right) \nonumber \\
&= \sum_{i=1}^d \sum_{j=1}^d \ketbra{i}{j} \otimes  \dfrac{1}{d} ((d+k-1) \mathbb{I} + \sum_{\alpha=1}^L \overline{m}_{\alpha, ij}  \lambda_{\alpha}^{(k-1)}).
\end{align}

Now we can use both Theorem~\ref{thm:Horodecki} and Lemma~\ref{lem:entanglementbreaking} that demonstrate the complementary channel is entanglement-breaking for the case where $k=2$,
\begin{equation}\label{eq:varsigma3}
\varsigma_{BE}=\sum_l q_l \chi_l \otimes\xi_l
=\sum_l q_l \chi_l
\otimes{1\over d}\Big((k+d-1)\mathbb{I}+\sum_{\alpha=1}\overline{ n}_{\alpha, l}\lambda_{\alpha, l}^{(k-1)}\Big).
\end{equation}
Thus there exists a mapping from the coefficients $\{ m_{\alpha,ij} \}$ in~\eqref{eq:varsigma2} to the set of coefficients $\{ q_l, n_{\alpha, l} \}$ and density matricies $\{ \chi_l \}$ in~\eqref{eq:varsigma3}. Moreover, since the action of the channel is in fact identical in structure for all $k$, in that it just maps to higher-dimensional algebra generators for higher $k$, then this mapping should exist for all $k$. Thus, the result in~\eqref{eq:varsigma3} should exist for all $k$, and therefore the complementary channel for a $1 \rightarrow k$ universal qudit cloner is entanglement-breaking for all $k$. Since the full qudit Unruh channel is a weighted direct sum of $1 \rightarrow k$ universal qudit cloners, we can conclude that the full Unruh channel is a Hadamard channel since each of the elements of the direct sum is a Hadamard channel.
\end{proof}

\section{Capacity regions}
\label{sec:capacity_regions}

We now set out to calculate the full triple trade-off region for the qudit Unruh channel. This region is the set of achievable rates for the resources of classical communication, quantum communication, and quantum entanglement---in a sense, it fully characterizes the capabilities of our channel. A Pareto optimal point of a capacity region is a point such that when considering the trade-off between different resources in the capacity region, improving one resource comes at the expense of another. The quantum dynamic capacity formula enables the calculation of all Pareto optimal points of the various capacity regions of the qudit Unruh channel because it is additive for the case of Hadamard channels~\cite{Wilde_QDC}.
\begin{definition}
The quantum dynamic capacity formula for a quantum Hadamard channel $\mathcal{N}$ is as follows:
\begin{align}
\label{eq:dynamiccapacity}
\mathcal{D}_{\lambda, \mu} (\mathcal{N} ) \equiv \max_{\sigma} \left[ I(AX;B)_{\sigma} + \lambda I(A \rangle BX)_{\sigma} + \mu ( I(X;B)_{\sigma} + I(A \rangle BX)_{\sigma} ) \right]
\end{align}
where $\lambda, \mu \ge 0$, the states $\sigma$ have the form
\begin{align}
\label{eq:classicalquantumstate}
\sigma^{XAB} = \sum_{x \in X} p(x) \ketbra{x}{x}^X \otimes \mathcal{N}^{A' \rightarrow B} ( \phi_x^{AA'} ),
\end{align}
and the states $\phi_x^{AA'}$ are pure.
\end{definition}

\subsection{CQE capacity region}

In order to facilitate our calculation of the state that maximizes~\eqref{eq:dynamiccapacity} for the qudit Unruh channel, we shall exploit the block diagonal structure of the channel and the fact that each block is a universal qudit cloning channel, as shown in Lemma~\ref{lem:blockcloner}. We shall find a state of the form~\eqref{eq:classicalquantumstate} that maximizes~\eqref{eq:dynamiccapacity} for all $\lambda, \mu \ge 0$ for the $1\rightarrow k$ universal qudit cloners that make up the qudit Unruh channel. The states that lead to this maximization will also form the Pareto-optimal points for the full triple trade-off capacity region. The resulting states are given by the following theorem.

\begin{theorem}
\label{thm:trade-off-states}
An ensemble of the following form is sufficient to obtain all Pareto-optimal points of the CQE capacity region of a $1 \rightarrow k$ universal qudit cloner:
\begin{align}
\label{eq:CQEoptimalstate}
\dfrac{1}{d} \left( \ketbra{1}{1}^X \otimes \psi_1^{AA'} + \cdots + \ketbra{d}{d}^X \otimes \psi_d^{AA'} \right),
\end{align}
where
\begin{align}
\Tr_A (\psi_1^{AA'})&= (1-\mu_1- \cdots - \mu_{d-1}) \ketbra{1}{1}^{A'} + \mu_1 \ketbra{2}{2}^{A'}  + \cdots + \mu_{d-1} \ketbra{d}{d}^{A'} , \nonumber \\
\Tr_A (\psi_2^{AA'})&= \mu_{d-1} \ketbra{1}{1}^{A'}  + (1-\mu_1- \cdots - \mu_{d-1}) \ketbra{2}{2}^{A'}  + \cdots + \mu_{d-2} \ketbra{d}{d}^{A'} ,\nonumber  \\
\nonumber
&\vdots \\
\Tr_A (\psi_d^{AA'})&= \mu_1 \ketbra{1}{1}^{A'}  + \mu_2 \ketbra{2}{2}^{A'}  + \cdots +  (1-\mu_1- \cdots - \mu_{d-1}) \ketbra{d}{d}^{A'} , \nonumber
\end{align}
and the states $\psi_i$ are pure.

\end{theorem}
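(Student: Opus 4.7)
The plan is to reduce the optimization of the quantum dynamic capacity formula $\mathcal{D}_{\lambda,\mu}$ to ensembles respecting the natural symmetries of the $1\to k$ universal qudit cloner. The essential tool is the $SU(d)$-covariance of the cloner (see the remark after the definition of covariance), which in particular yields covariance under the Heisenberg--Weyl subgroup $HW(d)$ generated by the cyclic shift $W_{A'}:\ket{i}\mapsto\ket{i{+}1\bmod d}$ and the phase $Z_{A'}:\ket{i}\mapsto e^{2\pi i/d}\ket{i}$: for each $(j,k)\in\mathbb{Z}_d^2$ there is a unitary $V_{jk}$ on the output with $\mathcal{N}(Z^kW^j\rho W^{-j}Z^{-k})=V_{jk}\mathcal{N}(\rho)V_{jk}^{\dagger}$. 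Since the entropic quantities in $\mathcal{D}_{\lambda,\mu}$ are invariant under isometries on $B$, symmetrizing any optimal ensemble by $HW(d)$ should preserve its value and force the claimed marginal structure.

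Concretely, from an optimal ensemble $\{p(x),\phi_x^{AA'}\}$ we would construct the augmented ensemble indexed by $(x,j,k)\in X\times\mathbb{Z}_d^2$ with probability $p(x)/d^2$ and pure states $(I_A\otimes Z_{A'}^kW_{A'}^j)\phi_x(I_A\otimes Z_{A'}^kW_{A'}^j)^{\dagger}$. A direct calculation exploiting covariance and the fact that $Z_{A'}^kW_{A'}^j$ acts only on $A'$ shows that $I(A\rangle B(XJK))$ equals the original $I(A\rangle BX)$, while $I((XJK);B)$ and $I(A(XJK);B)$ both increase by the same non-negative slack $\Delta=H(B)_{\mathrm{sym}}-H(B)_{\mathrm{orig}}$, where $H(B)_{\mathrm{sym}}$ is the entropy of the $HW(d)$-averaged output marginal. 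The augmented formula value is therefore $\mathcal{D}_{\lambda,\mu}^{\mathrm{orig}}+(1+\mu)\Delta$, so optimality forces $\Delta=0$; strict concavity of entropy then implies the $B$-marginal is $HW(d)$-invariant and the symmetrized ensemble attains the optimum without loss.

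In the symmetrized ensemble, averaging over $k$ annihilates off-diagonal elements of each $A'$-marginal in the computational basis, while averaging over $j$ cyclically permutes those diagonals. Coarse-graining the classical register to retain only the shift label $j$, and absorbing the residual $(x,k)$ randomness into a coherent purification on an enlarged reference $A$, yields an ensemble of exactly the form in~\eqref{eq:CQEoptimalstate}: $d$ equally-weighted labels $j$, each conditioned on a pure $\psi_j^{AA'}$ whose $A'$-marginal is $\sum_i \nu_{i-j}\ketbra{i}{i}$, with $\nu_i=\langle i|\bar\rho|i\rangle$ for $\bar\rho=\sum_x p(x)\Tr_A\phi_x^{AA'}$; setting $\nu_0=1-\sum_{i=1}^{d-1}\mu_i$ and $\nu_i=\mu_i$ matches the stated parameterization. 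By Uhlmann's theorem the specific purification is irrelevant, since all purifications of the same $A'$-marginal differ by an isometry on $A$, to which $\mathcal{D}_{\lambda,\mu}$ is insensitive.

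The main obstacle is the entropic bookkeeping: $\mathcal{D}_{\lambda,\mu}$ combines a Holevo-type information, a coherent information, and a classical mutual information, each responding differently to extending or coarsening the classical register. The key identity in each augmentation step is that under the classical-controlled $V_{jk}^{\dagger}$ on $B$, all conditional entropies on $B$ given $(X,J,K)$ reduce to their counterparts conditioned only on $X$, so every gain from the symmetrization is confined to the unconditioned $H(B)$; establishing $\Delta=0$ then pins the optimal structure. A secondary subtlety is that coarse-graining from $(X,J,K)$ back to $J$ could in principle decrease $\mathcal{D}_{\lambda,\mu}$, and the argument must track the compensating gain from re-purifying the $(X,K)$ randomness as quantum coherence on the enlarged $A$---this is precisely where the equality between the coarse-grained value and the augmented value must be verified.
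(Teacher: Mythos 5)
Your Heisenberg--Weyl symmetrization step is sound and is essentially the paper's own first move: the paper augments the ensemble with the generalized Paulis $X(j)Z(k)$ at probabilities $p_X(x)/d^2$, and your bookkeeping---conditional entropies unchanged by covariance, all gain confined to the unconditioned $H(B)$, value shifted by $(1+\mu)\Delta$---is correct. The genuine gap is your final step: coarse-graining $(X,J,K)$ down to $J$ and ``absorbing the residual $(x,k)$ randomness into a coherent purification.'' That operation replaces the conditional inputs $\rho_x^{A'}$ by the averaged, dephased states $W^j\pi(\bar\rho)W^{-j}$ (where $\pi$ deletes off-diagonal terms), and by concavity of entropy it weakly \emph{increases} all three conditional entropies $H(A|\cdot)$, $H(B|\cdot)$, $H(E|\cdot)$. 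Writing the objective as the paper does, $\mathcal{D}_{\lambda,\mu}=H(A|X)+(\mu+1)H(B)+\lambda H(B|X)-(\lambda+\mu+1)H(E|X)$, the first two increases carry non-negative coefficients $+1$ and $+\lambda$, but the increase in $H(E|X)$ carries the negative coefficient $-(\lambda+\mu+1)$, so the coarse-grain-and-repurify move can strictly \emph{decrease} the objective. You flag this subtlety yourself but supply no argument, and the standard tools do not close it: the relevant per-state functional $H(\rho)+\lambda H(\mathcal{N}(\rho))-(\lambda+\mu+1)H(\mathcal{N}^c(\rho))$ decomposes into concave pieces ($H(\rho)-H(\mathcal{N}^c(\rho))$, which is a conditional output entropy $H(B|E)$, plus $\lambda$ times the coherent information, concave because the cloner is degradable) and the leftover \emph{convex} piece $-\mu H(\mathcal{N}^c(\rho))$, so for $\mu>0$ there is no ``averaging inputs cannot hurt'' principle available.

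The paper closes exactly this gap by a different, sign-insensitive argument. After the same augmentation and the identification $H(B)_\sigma=\log\binom{k+d-1}{d-1}$, it bounds the residual $x$-average not by the value at an averaged state but by its best member: $\sum_x p_X(x)\left[H(A)_{\rho_x}+\lambda H(B)_{\rho_x}-(\lambda+\mu+1)H(E)_{\rho_x}\right]\le\max_x\left[\,\cdots\right]$, which is valid regardless of the signs of the coefficients simply because $p_X$ is a probability distribution. Covariance then allows the single optimizer $\rho^*$ to be taken diagonal, and the ensemble of purifications of its $d$ cyclic shifts $X(j)\rho^*X(j)^{\dagger}$, each with weight $1/d$, saturates the bound: the uniform input average is $I/d$, so $H(B)$ attains the ceiling, while every conditional term equals the corresponding entropy of $\rho^*$ exactly---no entropy of an averaged state ever appears. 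Replacing your coarse-graining step with this select-the-best-$x$ argument repairs your proof; note also that the paper's inequality-chain formulation never requires an optimizing ensemble to exist, whereas your ``optimality forces $\Delta=0$'' step implicitly does.
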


\begin{proof}
Consider the following classical-quantum state:
\begin{align}
\rho^{XA'}=\sum_x p_X(x) \ketbra{x}{x}^X \otimes \rho_x^{A'}
\end{align}

We shall denote $Cl_{1\rightarrow k}^{(d)}$ to be the $1 \rightarrow k$ universal qudit quantum cloner. We now introduce an augmented classical-quantum state:
\begin{equation}
\sigma^{XJKA'}=\sum_x \sum_{j=0}^{d-1} \sum_{k=0}^{d-1} \dfrac{1}{d^2}  p_X(x) \left( \ketbra{x}{x}^X \otimes \ketbra{j}{j}^J \otimes \ketbra{k}{k}^K \otimes X(j) Z(k) \rho_x^{A'} Z(k)^{\dagger} X(j)^{\dagger} \right)
\end{equation}
where $X(j)$ and $Z(k)$ are the generalized Pauli operators in $d$ dimensions.

The following set of equalities shows that $\sigma^B$ is equal to the maximally mixed state on the symmetric subspace of Bob:
\begin{align}
\label{eq:sigmab}
\sigma^B = Cl_{1\rightarrow k}^{(d)}(\sigma^{A'})=Cl_{1\rightarrow k}^{(d)}\left(\dfrac{I^{A'}}{d}\right) = Cl_{1\rightarrow k}^{(d)} \left(\int V \omega V^{\dagger} dV \right) &= \int R_V Cl_{1\rightarrow k}^{(d)}(\omega) R_{V^{\dagger}}dV \nonumber  \\
&= \dfrac{1}{{k+d-1 \choose d-1}} \sum_{i=1}^{{k+d-1 \choose d-1}} \ketbra{i}{i}
\end{align}
The second equality follows from the fact that an equally weighted mixture of all $d^2$ generalized Pauli matrices produces the maximally mixed state. The fourth equality uses the linearity and covariance of the universal cloning channel.

We now analyze the quantum dynamic capacity formula. Consider the following chain of inequalities:
\begin{align}
\label{eq:QDC}
\nonumber
&I(AX;B)_{\rho} + \lambda I(A \rangle BX)_{\rho} + \mu \left( I(X;B)_{\rho} + I(A \rangle BX)_{\rho} \right) \\
\nonumber
&=H(A|X)_{\rho} + (\mu +1)H(B)_{\rho} + \lambda H(B|X)_{\rho} - (\lambda + \mu + 1) H(E|X)_{\rho} \\
\nonumber
&=H(A|XJK)_{\sigma} + (\mu +1)H(B)_{\rho} + \lambda H(B|XJK)_{\sigma} - (\lambda + \mu + 1) H(E|XJK)_{\sigma} \\
\nonumber
& \le (\mu +1) H(B)_{\sigma} + H(A|XJK)_{\sigma} + \lambda H(B|XJK)_{\sigma} - (\lambda + \mu + 1) H(E|XJK)_{\sigma} \\
\nonumber
&=  (\mu +1) \log{k+d-1 \choose d-1} + H(A|XJK)_{\sigma} + \lambda H(B|XJK)_{\sigma} - (\lambda + \mu + 1) H(E|XJK)_{\sigma} \\
\nonumber
&= (\mu +1) \log{k+d-1 \choose d-1} + \sum_x p_X(x) \left[ H(A)_{\rho_x} + \lambda H(B)_{\rho_x} - (\lambda + \mu + 1) H(E)_{\rho_x} \right] \\
\nonumber
& \le  (\mu +1) \log{k+d-1 \choose d-1} +\max_x \left[ H(A)_{\rho_x} + \lambda H(B)_{\rho_x} - (\lambda + \mu + 1) H(E)_{\rho_x} \right] \\
&= (\mu +1) \log{k+d-1 \choose d-1} +H(A)_{\rho_x^*} + \lambda H(B)_{\rho_x^*} - (\lambda + \mu + 1) H(E)_{\rho_x^*}
\end{align}
The first equality follows from the definition of the quantum mutual information and coherent information. The second equality follows from the fact that the conditional entropies are invariant under the unitary transformations $X(j)$ and $Z(k)$. The first inequality follows from the concavity of entropy and the fact that $\sigma^B$ is a mixture of states of the form of $\rho^B$. The third equality follows from~\eqref{eq:sigmab}.  The fourth equality follows from the fact that the $X$, $J$ and $K$ subsystems are classical. The second inequality follows because $p_X(x)$ is a probability distribution---the weighted sum over the probability distribution is smaller than the maximal value of the term in the square brackets. The final equality follows from defining $\rho_x^*$ to be the state that maximizes the quantity in the square brackets.

The entropies $H(A)_{\rho_x^*}$, $H(B)_{\rho_x^*}$ and $H(E)_{\rho_x^*}$ depend only on the eigenvalues of the input state $\rho^*$ by the covariance of the channel and its complement. Thus, without loss of generality, we can take the state $\rho^*$ to be diagonal in the $\{ \ket{1}, \cdots, \ket{d} \}$ basis of $A'$. The ensemble defined to consist of the purifications of $X(j)\rho^*X(j)^{\dagger}$, $1\le j \le d$, assigned with equal probability for all $j$ then saturates the upper bound in~\eqref{eq:QDC}.
\end{proof}

Recall the classically-enhanced quantum capacity region (CQ) and the entanglement-assisted classical capacity region (CE) outlined in Section~\ref{sec:capacity_regions}. We can now use the state provided in Theorem~\ref{thm:trade-off-states} to calculate the CQ and CE boundary trade-off curves for $1\rightarrow k$ universal cloning channels.

\begin{theorem}
\label{thm:CQCE_cloners}
The CQ trade-off curve that bounds the classically-enhanced quantum capacity region for a $1 \rightarrow k$ universal qudit cloner is given by the convex hull over the following set of points:
\begin{align}
\left( \log{{k+d-1 \choose d-1}} -  H\left( \dfrac{\alpha(b_1,\cdots, b_d)}{{k+d-1 \choose d}} \right), H \left( \dfrac{\alpha(b_1,\cdots, b_d)}{{k+d-1 \choose d}} \right) -  H \left( \dfrac{\gamma(b_1,\cdots, b_d)}{{k+d-1 \choose d}} \right)  \right).
\label{eq:CE_cloners}
\end{align}
The CE trade-off curve that bounds the entanglement-assisted classical capacity region for a $1 \rightarrow k$ universal qudit cloner is given by the convex hull over the following points:
\begin{align}
\left( \log{{k+d-1 \choose d-1}} -  \sum_{i=1}^d \mu_i \log{\mu_i} - H \left( \dfrac{\gamma(b_1,\cdots, b_d)}{{k+d-1 \choose d}} \right) ,  -  \sum_{i=1}^d \mu_i \log{\mu_i}  \right),
\end{align}
where
\begin{align}
\alpha(b_1 , \cdots, b_d) &= \sum_{i=1}^d \mu_i b_i \qquad \text{such that} \qquad \sum_{i=1}^d b_i = k,  \nonumber \\
\gamma(b_1 , \cdots, b_d) &= 1+ \sum_{i=1}^d \mu_i b_i \qquad \text{such that} \qquad \sum_{i=1}^d b_i = k-1. \nonumber
\end{align}

\end{theorem}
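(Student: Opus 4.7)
The plan is to take the optimal ensemble identified in Theorem~\ref{thm:trade-off-states} and compute each conditional entropy appearing in the single-letter CQ and CE formulas of Section~\ref{sec:capacities}, then read off the two boundary curves directly.

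First, since each $\psi_x^{AA'}$ is pure, $H(A|X=x)=H(A'|X=x)=-\sum_{i=1}^d \mu_i\log\mu_i$ (with $\mu_d:=1-\sum_{i=1}^{d-1}\mu_i$), and this value is independent of $x$ because the reductions $\Tr_A\psi_x^{AA'}$ are permutations of one another. This gives the $E$-coordinate of the CE curve immediately. Next I would fix $x=1$ to compute $H(B|X)$ and $H(E|X)$, which are likewise $x$-independent by the $SU(d)$-covariance of both the cloner and its complement. Lemma~\ref{lem:blockcloner} says that $Cl_{1\rightarrow k}^{(d)}(\ketbra{i}{i})$ is diagonal with eigenvalues $b_i/{k+d-1 \choose d}$ labelled by vectors $(b_1,\ldots,b_d)$ with $\sum_j b_j=k$; reading~\eqref{eq:cloner_eigs} with $\vec{n}=e_i$ gives the complementary output as diagonal with eigenvalues $(1+j_i)/{k+d-1 \choose d}$ labelled by $(j_1,\ldots,j_d)$ with $\sum_j j_j=k-1$. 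By linearity, the diagonal mixture $\Tr_A\psi_1^{AA'}=\sum_i \mu_i\ketbra{i}{i}$ then produces $B$- and $E$-outputs with spectra $\alpha(b_1,\ldots,b_d)/{k+d-1 \choose d}$ and $\gamma(j_1,\ldots,j_d)/{k+d-1 \choose d}$ respectively, so $H(B|X)$ and $H(E|X)$ equal the two entropies appearing in the theorem statement.

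For $H(B)$ I would average over $X$: the $d$ cyclic permutations of the diagonal state with entries $(\mu_d,\mu_1,\ldots,\mu_{d-1})$ assigned uniform probability produce the maximally mixed state on $A'$, so by covariance the marginal $\sigma^B$ is maximally mixed on the symmetric subspace of $B$, yielding $H(B)=\log{k+d-1 \choose d-1}$. Combining these pieces via $I(X;B)=H(B)-H(B|X)$, $I(A\rangle BX)=H(B|X)-H(E|X)$, and $I(AX;B)=H(B)+H(A|X)-H(E|X)$ (the last using purity of $\psi_x^{AA'}$ to conclude $H(B|AX)=H(E|X)-H(A|X)$) produces the two pairs displayed in the theorem. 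The capacity regions in Section~\ref{sec:capacities} are defined by $C\le I(X;B)$, $Q\le I(A\rangle BX)$ in the CQ case and $C\le I(AX;B)$, $E\ge H(A|X)$ in the CE case, so their upper boundaries are the convex hulls of these corner points as $(\mu_1,\ldots,\mu_{d-1})$ ranges over the probability simplex.

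The main obstacle is extracting the $E$-eigenvalues cleanly from~\eqref{eq:cloner_eigs} and confirming that the linearity-plus-diagonalization argument does give the $\gamma$ distribution; once that is established, the remaining steps are routine entropy arithmetic. A smaller point is justifying that varying only the $(d-1)$-parameter family $(\mu_1,\ldots,\mu_{d-1})$ is sufficient to trace out all Pareto-optimal points, which follows because Theorem~\ref{thm:trade-off-states} already identified this family as optimal for the full quantum dynamic capacity formula.
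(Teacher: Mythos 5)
Your proposal is correct and follows essentially the same route as the paper: both take the ensemble of Theorem~\ref{thm:trade-off-states}, use covariance of the cloner and its complement to reduce the conditional entropies to a single value of $x$, identify $H(B)=\log\binom{k+d-1}{d-1}$ from the fact that the averaged input is maximally mixed, and obtain the spectra $\alpha(b_1,\ldots,b_d)/\binom{k+d-1}{d}$ and $\gamma(b_1,\ldots,b_d)/\binom{k+d-1}{d}$ for $H(B|X)$ and $H(E|X)$. The only cosmetic difference is that the paper derives these spectra by writing out the isometric extension of the cloner and tracing out subsystems, whereas you read them off from the basis-state outputs of Lemma~\ref{lem:blockcloner} and Eq.~\eqref{eq:cloner_eigs} together with linearity; both yield the same entropies and hence the same corner points.
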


\begin{proof}
Now in order to obtain the CQ and CE trade-off curves we consider an ensemble of the form in Theorem~\ref{thm:trade-off-states}. The ensemble we shall consider will thus have the following form:
\begin{align}
\label{eq:CQCE_States}
\dfrac{1}{d} \left( \ketbra{1}{1}^X \otimes \psi_1^{AA'} + \cdots + \ketbra{d}{d} \otimes \psi_d^{AA'} \right),
\end{align}
where
\begin{align}
\Tr_A (\psi_1^{AA'})&= (1-\mu_1- \cdots - \mu_{d-1}) \ketbra{1}{1}^{A'} + \mu_1 \ketbra{2}{2}^{A'}  + \cdots + \mu_{d-1} \ketbra{d}{d}^{A'} , \nonumber \\
\Tr_A (\psi_2^{AA'})&= \mu_{d-1} \ketbra{1}{1}^{A'}  + (1-\mu_1- \cdots - \mu_{d-1}) \ketbra{2}{2}^{A'}  + \cdots + \mu_{d-2} \ketbra{d}{d}^{A'} ,\nonumber  \\
\nonumber
&\vdots \\
\Tr_A (\psi_d^{AA'})&= \mu_1 \ketbra{1}{1}^{A'}  + \mu_2 \ketbra{2}{2}^{A'}  + \cdots +  (1-\mu_1- \cdots - \mu_{d-1}) \ketbra{d}{d}^{A'} . \nonumber
\end{align}
An isometric extension of the $1 \rightarrow k$ universal cloning channel $\mathcal{U}_N$ acts as follows on the above states:
\begin{align}
\ket{\psi_n}^{ABE} = \dfrac{1}{\sqrt{{k+d-1 \choose d}}} \sum_{i=1}^d \sum_{B(k-1)} \sqrt{\mu_{i-n\text{ mod }d}} \sqrt{1+b_i} \ket{i}^A \ket{(b_1,\cdots,1+b_i,\cdots, b_d)}^B \ket{(b_1,\cdots, b_i, \cdots, b_d)} ^E,
\end{align}
where $B(m) = \{ b_i, \text{ }1\le i \le d \text{ } | \sum_{i=1}^d b_i =m \}$.

The output state of the isometric extension is then
\begin{align}
\rho^{XABE}=\dfrac{1}{d} \left( \ketbra{1}{1}^X \otimes \psi_1^{ABE} + \cdots + \ketbra{d}{d}^X \otimes \psi_d^{ABE} \right).
\end{align}

Thus,
\begin{align}
\rho^{XA}&= \dfrac{1}{d} \left(\ketbra{1}{1}^X \otimes \psi_1^A + \cdots + \ketbra{d}{d}^X \otimes \psi_d^A \right), \\
\rho^{XB} &= \dfrac{1}{d} \left(\ketbra{1}{1}^X \otimes \psi_1^B + \cdots + \ketbra{d}{d}^X \otimes \psi_d^B \right), \\
\label{eq:rhob}
\rho^B &= \dfrac{1}{{k+d-1 \choose d-1}} \sum_{B(k)} \ketbra{(b_1,\cdots, b_d)}{(b_1,\cdots,b_d)}, \\
\psi_i^B &= \dfrac{1}{{k+d-1 \choose d}} \sum_{B(k)} (\mu_{1-i \text{ mod }d}b_1+\cdots+\mu_{d-i\text{ mod }d}b_d) \ketbra{(b_1,\cdots, b_d)}{(b_1,\cdots, b_d)}^B.
\end{align}
To obtain~\eqref{eq:rhob} we used the calculation in~\eqref{eq:sigmab}. We can then use these output states to calculate the following entropies:
\begin{align}
H(A|X)&=H(\psi_d^A)=H(\sum_{i=1}^d \mu_i \ketbra{i}{i}) =  - \sum_{i=1}^d \mu_i \log{\mu_i},\\
H(B) &= \log{{k+d-1 \choose d-1}}, \\
\label{eq:HBX}
H(B|X) &= H(\psi_d^B) = H \left( \dfrac{1}{{k+d-1 \choose d}} \sum_{B(k)} (\mu_1b_1+\cdots+\mu_d b_d) \ketbra{(b_1,\cdots, b_d)}{(b_1,\cdots, b_d)} \right),
\end{align}
where we have used the covariance of the channel in~\eqref{eq:HBX}. Thus the Holevo information is given by the following expression:
\begin{align}
I(X;B) &= H(B) - H(B|X) \\
&=\log{{k+d-1 \choose d-1}} -  H\left( \dfrac{1}{{k+d-1 \choose d}} \sum_{B(k)} (\mu_1b_1+\cdots+\mu_d b_d) \ketbra{(b_1,\cdots, b_d)}{(b_1,\cdots, b_d)} \right).
\end{align}
Now to calculate the coherent information $I(A \rangle BX)$ the following states are important:
\begin{align}
\rho^{XE} &= \dfrac{1}{d} \left(\ketbra{1}{1}^X \otimes \psi_1^E + \cdots + \ketbra{d}{d}^X \otimes \psi_d^E \right), \\
\psi_i^E &= \dfrac{1}{{k+d-1 \choose d}} \sum_{B(k-1)} (\mu_{1-i\text{ mod }d}b_1+\cdots+\mu_{d-i\text{ mod }d}b_d+1) \ketbra{(b_1,\cdots, b_d)}{(b_1,\cdots, b_d)}^E .
\end{align}
The coherent information is then given by the following:
\begin{align}
I(A \rangle BX) &= H(B|X) - H(E|X)\\
\nonumber
&= H( \dfrac{1}{{k+d-1 \choose d}} \sum_{B(k)} (\mu_1b_1+\cdots+\mu_d b_d) \ketbra{(b_1,\cdots, b_d)}{(b_1,\cdots, b_d)} ) \\
&\qquad \qquad - H( \dfrac{1}{{k+d-1 \choose d}} \sum_{B(k-1)} (\mu_1b_1+\cdots+\mu_d b_d+1) \ketbra{(b_1,\cdots, b_d)}{(b_1,\cdots, b_d)} ).
\end{align}
\end{proof}


\begin{figure}
\centering
$
\subfigure[\text{ }Classically-enhanced quantum capacity region]{
	\epsfig{file=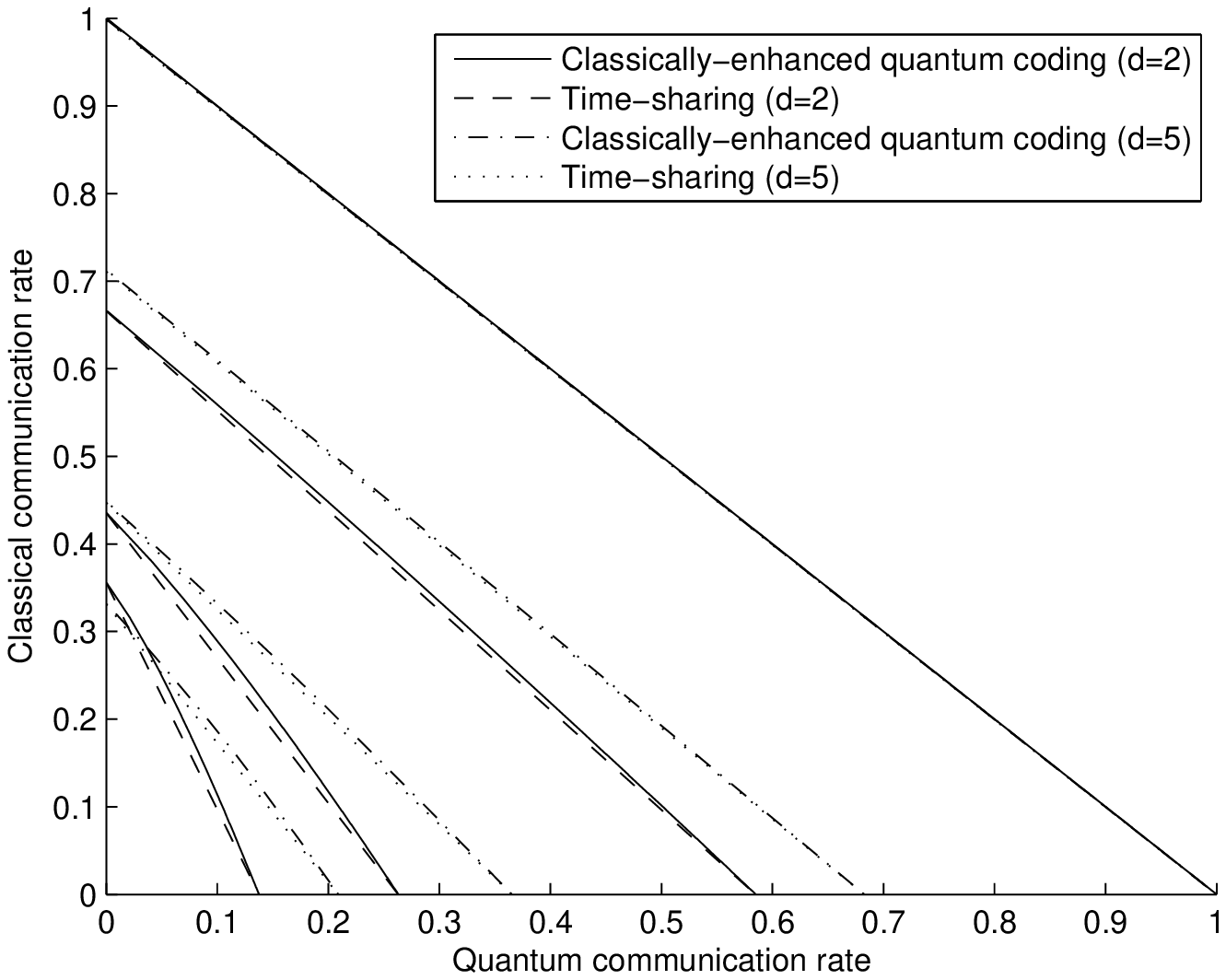, width=0.45\textwidth}
	\hspace{.1 in}
	\label{fig:CQ_cloner}
	}
\subfigure[\text{ }Entanglement-assisted classical capacity region]{
	\epsfig{file=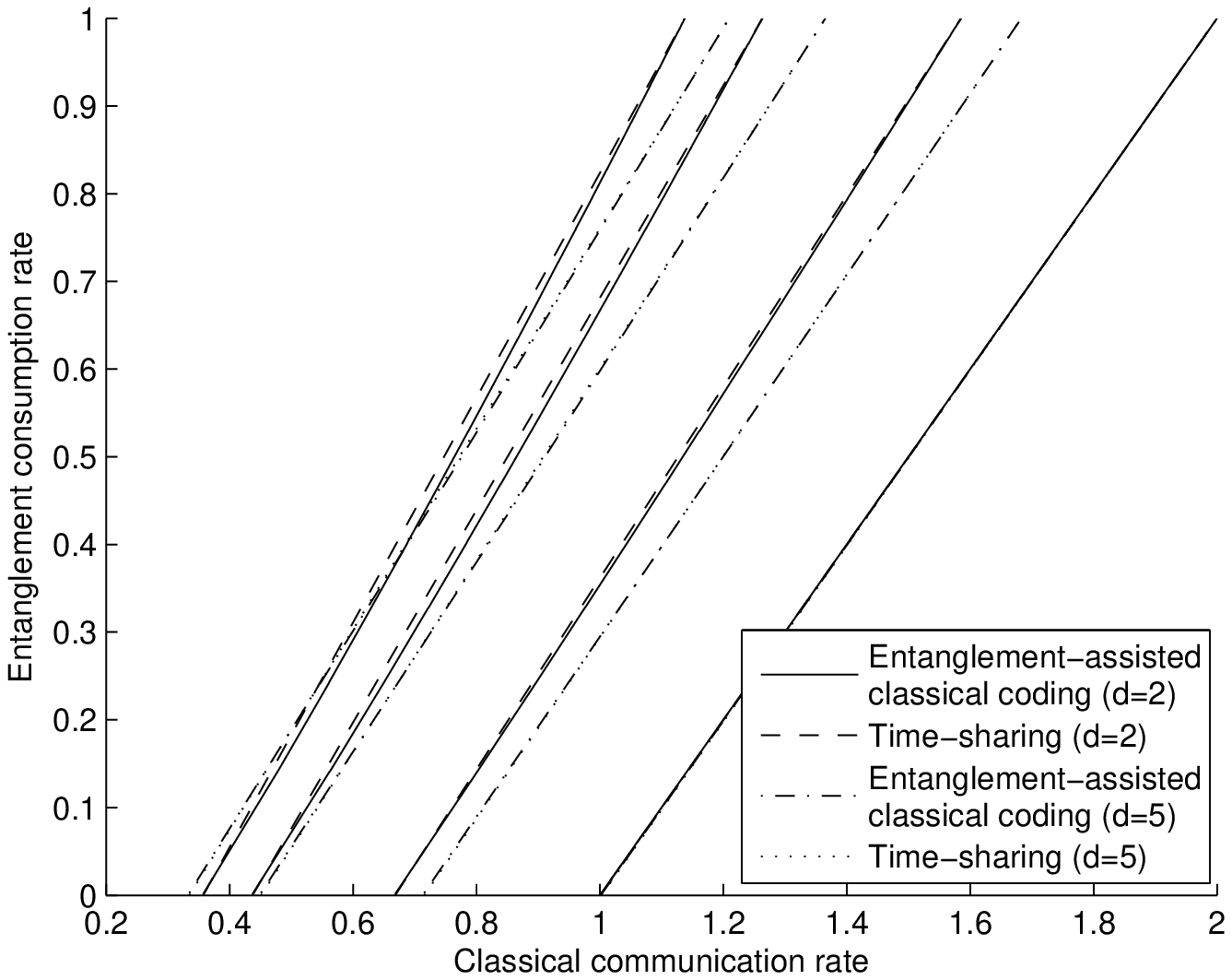, width=0.45\textwidth}
	\hspace{.1 in}
	\label{fig:CE_cloner}
}
$
\caption{Plots of the (a) CQ and (b) CE capacity regions for a $1 \rightarrow k$ universal qudit cloner with $k=1,2,5,10$. The logarithms taken in the calculation of the entropic quantities are in base $d$. Notice that for both dimensions 2 and 5 the classically-enchanced quantum capacity region is convex and larger than that of a time-sharing protocol. Similarly, the entanglement-assisted classical capacity region is convex and is larger than the time-sharing protocol for both dimension 2 and 5. The relative difference between trade-off coding and time-sharing increases as $k$ increases. }
\label{fig:CQCE_cloner}
\end{figure}

Figure~\ref{fig:CQCE_cloner} shows the resulting CQ and CE plots for $1\rightarrow k$ universal qudit cloners for $k=1,2,5,10$ of dimension 5, and compares these to the previously calculated CQ and CE trade-off curves for qubit (dimension 2) universal cloning channels~\cite{Bradler_Hadamard}. In the calculation of the information-theoretic quantities, for example for the calculation of the entropy, we have used a logarithm base~$d$ rather than a logarithm of base~2. We take this convention in order to treat all dimensions on an equal footing. For example, the maximally mixed state always has entropy 1 when one takes the logarithm in base $d$, for any dimension $d$, as opposed to always taking the logarithm in base 2. We produced the plots in Figure~\ref{fig:CQCE_cloner} by considering all possible distributions of input states of the form~\eqref{eq:CQCE_States}. While not all distributions will lead to extremal points, since the region is convex, we can take the convex hull of all these points and produce a region that will capture all extremal points at the boundary. For the CQ trade-off curves, Figure~\ref{fig:CQ_cloner} shows that the classically-enhanced quantum coding scheme achieves better rates than a time-sharing scheme. Time-sharing, in the case of CQ trade-off curves, corresponds to sending classical information a certain fraction of the time and quantum information the rest of the time over the channel. This is characterized on the CQ plot by a straight dashed line (dimension 2) and a straight dotted line (dimension 5) connecting the classical and quantum capacity of the channel in Figure~\ref{fig:CQ_cloner}. Time-sharing, in the CE plot, corresponds to the straight dashed line (dimension 2) and a straight dotted line (dimension 5) connecting the channel's rate of sending purely classical data and classical data assisted by consumption of a dit of entanglement per channel use. Again in the case of entanglement-assisted classical coding, the region is larger than that of time-sharing as characterized by the solid line always being below the dashed line (dimension 2) and the dash dotted line always being below the dotted line (dimension 5) in Figure~\ref{fig:CE_cloner}. Finally, it should be noted that the relative difference between the optimal coding strategy and that of a time-sharing strategy increases as the parameter $k$ increases, as shown by the strong curvature in the classically-enhanced quantum coding and entanglement-assisted coding curves for high values of $k$.

\begin{theorem}
\label{thm:CQEregion}
The quantum dynamic capacity region for the qudit Unruh channel, where $C$ is the classical communication rate, $Q$ is the quantum communication rate, and $E$ is the entanglement generation rate, is given by the convex hull over the regions characterized by the following set of inequalities:
\begin{align}
C + 2Q &\le  -\sum_{i=1}^d \mu_i \log{\mu_i} +\sum_{k=1}^{\infty} p_k(z) ( H(B)_{\rho_k} - H(E|X)_{\rho_k} ), \\
Q+E &\le  \sum_{k=1}^{\infty} p_k(z) (H(B|X)_{\rho_k} - H(E|X)_{\rho_k} ), \\
C+Q+E &\le \sum_{k=1}^{\infty} p_k(z) (H(B)_{\rho_k} - H(E|X)_{\rho_k}),
\end{align}
where $\rho_k$ is the output of the $1 \rightarrow k$ universal qudit cloner with an input of the form of~\eqref{eq:CQEoptimalstate}. A special case of the CQE capacity region is the CQ trade-off curve for the qudit Unruh channel, given by the convex hull over the following set of points:
\begin{align}
\left( \sum_{k=1}^{\infty} p_k(z) ( H(B)_{\rho_k} - H(B|X)_{\rho_k} ) ,  \sum_{k=1}^{\infty}  p_k(z) ( H(B|X)_{\rho_k} - H(E|X)_{\rho_k} ) \right),
\end{align}
and the CE trade-off curve for the qudit Unruh channel, given by the convex hull over the following points:
\begin{align}
\left( \sum_{k=1}^{\infty} p_k(z) ( H(B)_{\rho_k}  - H(E|X)_{\rho_k}) -\sum_{i=1}^d \mu_i \log{\mu_i} , - \sum_{i=1}^d \mu_i \log{\mu_i}  \right).
\end{align}
\end{theorem}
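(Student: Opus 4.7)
The plan is to combine two results proved earlier in the paper: Theorem~\ref{thm:Unruh_Hadamard} establishes that the Unruh channel is Hadamard, so by~\cite{Wilde_QDC} its quantum dynamic capacity region single-letterizes and it suffices to evaluate the single-letter formula~\eqref{eq:dynamiccapacity} together with the three one-shot inequalities from Section~\ref{sec:capacities}; and Lemma~\ref{lem:blockcloner} decomposes the Unruh channel as a direct sum supported on mutually orthogonal photon-number blocks whose $k$-th summand is, after normalization, a $1\to k$ universal qudit cloner with block weight $p_k(z)=(1-z)^{d+1}z^{k-1}{k+d-1\choose d}$. My strategy is to push the single-letter optimization through this direct-sum decomposition so that it reduces to the per-block cloner optimization already solved in Theorem~\ref{thm:trade-off-states}.

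First I would argue that it suffices to restrict to input ensembles of the form~\eqref{eq:CQEoptimalstate}. Using the $SU(d)$-covariance of the Unruh channel (inherited directly from the same covariance of each cloner block), the Pauli-averaging argument in the proof of Theorem~\ref{thm:trade-off-states} applies verbatim: augment any candidate classical-quantum input with uniformly distributed $X(j)Z(k)$ flags to force the $A'$-marginal to be maximally mixed while not decreasing the dynamic-capacity objective by concavity, and then pin down the conditional marginals $\Tr_A \psi_x^{AA'}$ to the permutation-equivalent diagonal form stated in Theorem~\ref{thm:trade-off-states}. Next I would verify that $\Tr \sigma_B^{(k)}={k+d-1\choose d}$ for any pure qudit input by a short combinatorial calculation using symmetry of the sum over $N(k-1)$, so that Bob's block label $K$ is a classical variable whose distribution depends only on $z$ and not on the input state, and Eve's block label (shifted by one photon) carries the same Shannon entropy. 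Since Bob's blocks are mutually orthogonal and the same is true for Eve, the label $K$ factors cleanly out of every relevant entropy: one obtains $H(B)=H(K)+\sum_k p_k(z) H(B)_{\rho_k}$ and the analogous identities for $H(B|X)$ and $H(E|X)$, while $H(A|X)=-\sum_i \mu_i\log\mu_i$ is determined entirely by the reference system that never touches the channel.

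Assembling the pieces, the $H(K)$ contributions cancel identically in $I(X;B)$, $I(A\rangle BX)$, and in $I(AX;B)=H(A|X)+H(B)-H(E|X)$ (the last equality using that each $\psi_x^{ABE}$ is pure), leaving exactly the three rate bounds stated in the theorem; the CQ and CE specializations then follow by setting $E=0$ and $Q=0$ respectively and matching against Theorem~\ref{thm:CQCE_cloners}, while the convex hull over the free parameters $\mu_1,\dots,\mu_{d-1}$ traces out the full Pareto boundary. The main obstacle I anticipate is precisely this cancellation step: one has to justify that the block-occupation distribution $p_k(z)$ is genuinely input-independent (the combinatorial identity above) and that the block-index entropies on Bob's and Eve's sides coincide despite living on shifted photon-number ladders. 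Once this synchronization is checked, the rest of the argument is a routine assembly of Theorem~\ref{thm:Unruh_Hadamard}, Lemma~\ref{lem:blockcloner}, and Theorem~\ref{thm:trade-off-states}.
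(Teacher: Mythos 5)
Your proposal is correct and follows essentially the same route as the paper: restrict to the optimal ensemble of Theorem~\ref{thm:trade-off-states} via the covariance/Pauli-averaging argument, exploit the block-diagonal decomposition of Lemma~\ref{lem:blockcloner} with weights $p_k(z)=(1-z)^{d+1}z^{k-1}\binom{k+d-1}{d}$ to write each entropy as $H(K)+\sum_k p_k(z)H(\cdot)_{\rho_k}$, and observe that the block-label entropy cancels in all three rate bounds. Your explicit check that the block distribution is input-independent is a detail the paper leaves implicit (it follows from the normalization computed in Lemma~\ref{lem:blockcloner} together with $SU(d)$-covariance), but it does not change the argument.
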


\begin{proof}
We can again use the state from~\eqref{eq:CQEoptimalstate} to obtain the points on the CQE triple trade-off region since the same argument from Theorem~\ref{thm:trade-off-states} can be repeated for the qudit Unruh channel. Thus the output of the isometric extension of the Unruh channel $\mathcal{U}$ for an input state of the form in~\eqref{eq:CQEoptimalstate} is
\begin{align}
&\ket{\psi_n}^{ABE} =\nonumber \\
& \qquad (1-z)^{\frac{d+1}{2}} \bigoplus_{k=1}^{\infty} z^{\frac{k-1}{2}} \sum_{i=1}^d \sum_{B(k-1)} \sqrt{\mu_{i-n\text{ mod }d}} \sqrt{1+b_i} \ket{i}^A \ket{(b_1,\cdots,1+b_i,\cdots, b_d)}^B \ket{(b_1,\cdots, b_i, \cdots, b_d)} ^E.
\end{align}
Now since each $k$ corresponds to a particular block, we can write the output of the Unruh channel as follows:
\begin{align}
\rho^{XABE}&=\dfrac{1}{d} \left( \ketbra{1}{1}^X \otimes \psi_1^{ABE} + \cdots + \ketbra{d}{d}^X \otimes \psi_d^{ABE} \right) \\
& =  (1-z)^{d+1} \bigoplus_{k=1}^{\infty} \dfrac{z^{k-1}}{d} { k+d-1 \choose d } \left( \ketbra{1}{1}^X \otimes \psi_{k,1}^{ABE} + \cdots + \ketbra{d}{d}^X \otimes \psi_{k,d}^{ABE} \right)\\
&=   (1-z)^{d+1} \bigoplus_{k=1}^{\infty} \dfrac{z^{k-1}}{d} { k+d-1 \choose d } \rho_k^{XABE},
\end{align}
where $\psi_{k,i}^{ABE}$ are the outputs for each block, which also correspond to the outputs of a $1 \rightarrow k$ universal qudit cloner, as outlined in the previous theorem. Therefore,
\begin{align}
H(B|X)_{\rho} &= H(\psi_d^B) = H \left( \bigoplus_{k=1}^{\infty} (1-z)^{d+1} z^{ k-1} { k+d-1 \choose d} \psi_{k,d}^B \right)\\
&= \sum_{k=1}^{\infty} H \left(  (1-z)^{d+1} z^{ k-1} { k+d-1 \choose d} \psi_{k,d}^B \right) \\
&= \sum_{k=1}^{\infty} - p_k(z) \log{p_k(z)} + p_k(z) H(B|X)_{\rho_k}, \\
\end{align}
where we have defined $p_k(z) = (1-z)^{d+1} z^{ k-1} { k+d-1 \choose d}$. Thus, similarly we can derive the following:
\begin{align}
H(B)_{\rho} &= \sum_{k=1}^{\infty} - p_k(z) \log{p_k(z)} + p_k(z) H(B)_{\rho_k} ,\\
H(E|X)_{\rho} &= \sum_{k=1}^{\infty} - p_k(z) \log{p_k(z)} + p_k(z) H(E|X)_{\rho_k} .\\
\end{align}
Finally, the quantity $H(A|X)_{\rho}$ remains the same as that of the universal cloning channel, that is, $H(A|X) = - \sum_{i=1}^d \mu_i \log{\mu_i}$.

\end{proof}


\begin{figure}
\begin{center}
$
\subfigure[\text{ }Classically-enchanced quantum capacity region]{
	\epsfig{file=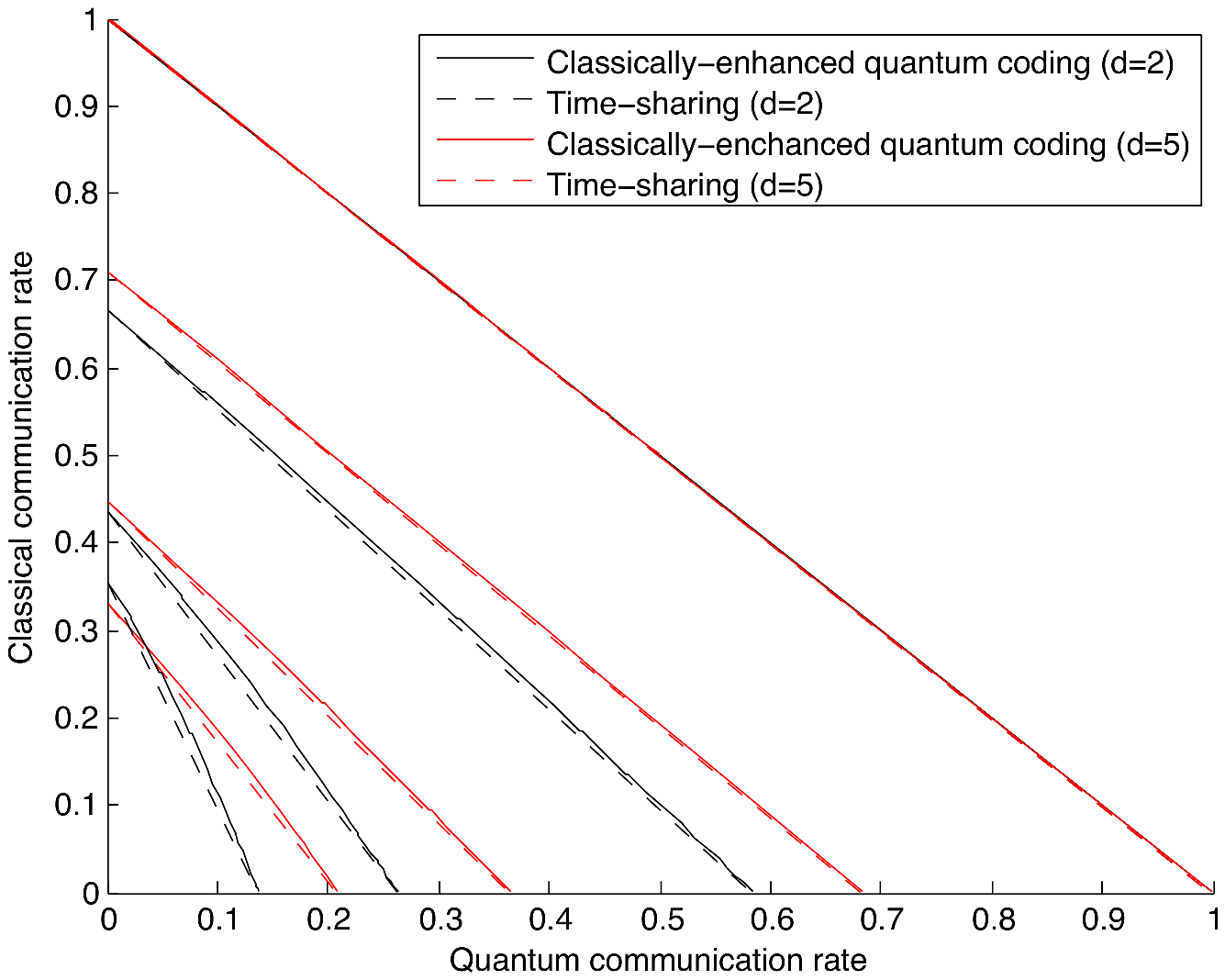, width=0.45\textwidth}
	\hspace{.1 in}
	\label{fig:CQ_Unruh}
	}
\subfigure[\text{ }Entanglement-assisted classical capacity region]{
	\epsfig{file=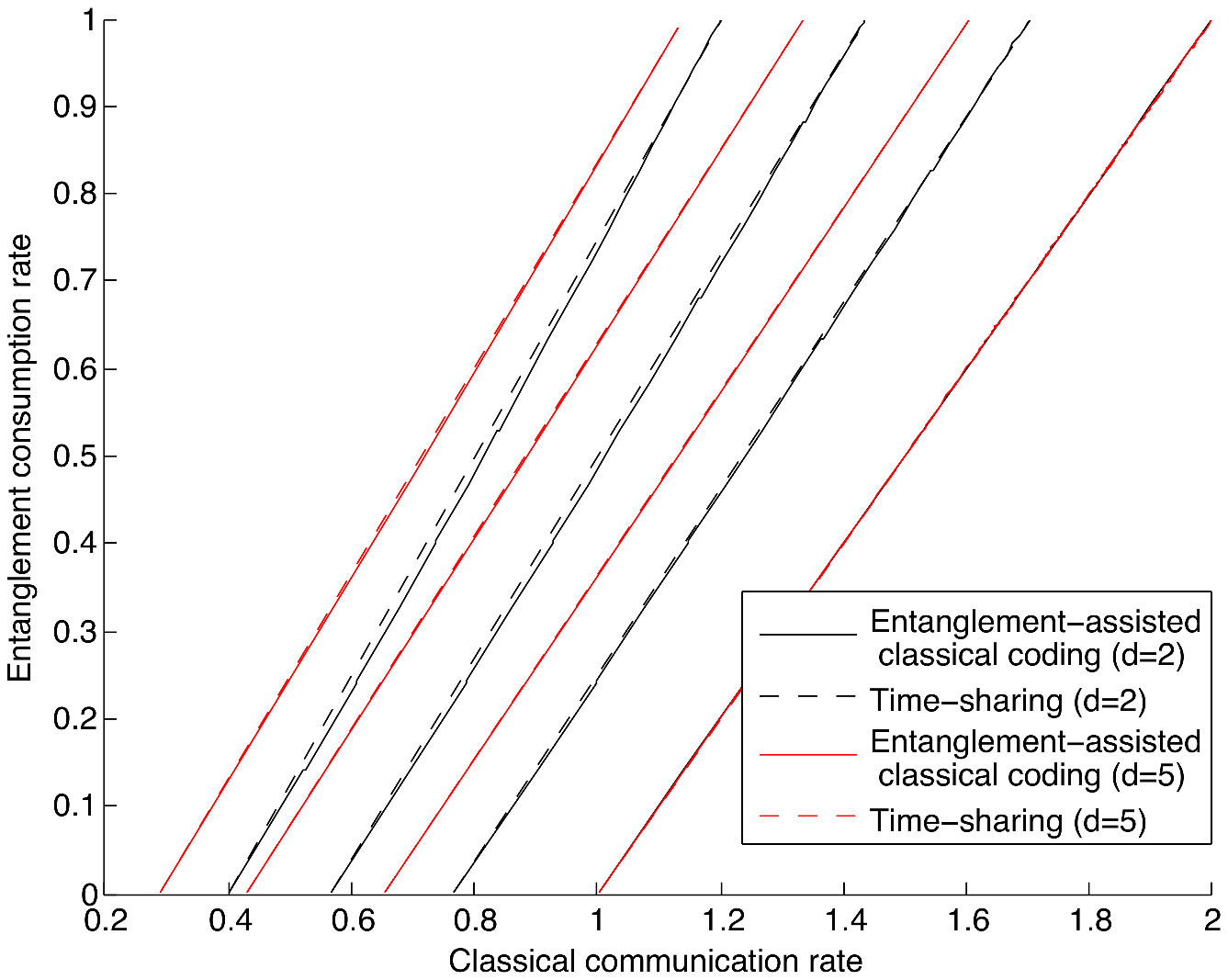, width=0.45\textwidth}
	\hspace{.1 in}
	\label{fig:CE_Unruh}
}
$
\caption{Plots of the (a) CQ and (b) CE capacity regions for the qudit Unruh channel, for dimensions $d=$ 2, 5, with acceleration parameter $z =$ 0, 0.25, 0.5, 0.75. The logarithms taken in the calculation of the entropic quantities are in base $d$. Both the classically-enhanced quantum region and the entanglement-assisted classical region are larger than that of the time-sharing protocols and the difference between these two regions grows as the acceleration parameter increases (downwards and to the left for the CQ region and upwards and to the left for the CE region).}
\label{fig:CQCE_Unruh}
\end{center}
\end{figure}

Figure~\ref{fig:CQCE_Unruh} plots the CQ and CE regions for the qudit Unruh channel for dimensions~2~and~5. The figure shows that both classically-enhanced quantum coding and entanglement-assisted classical coding beat a time-sharing protocol. Moreover, the difference between these optimal coding strategies and the time-sharing strategy increases as the acceleration parameter $z$ increases. This should come as no surprise because a larger value of $z$ implies an increased weight on $1 \rightarrow k$ universal qudit cloners for higher values of $k$. Finally, Figure~\ref{fig:CQE_Unruh} shows the full triple trade-off between the resources of noiseless classical communication, quantum communication, and entanglement consumption for dimension 3 and acceleration parameter~$z=0.75$.


\begin{figure}
\centering
\epsfig{file=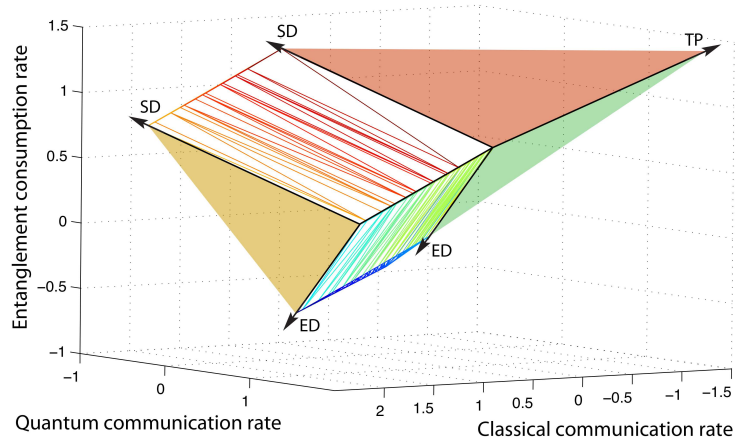, width=0.60\textwidth}
\caption{Plot of the full triple-trade off region of the qudit Unruh channel for dimension $d=3$ and acceleration parameter $z=0.75$. The region extends infinitely in several directions as it characterizes the capabilities of the classically-enhanced father protocol~\cite{Hsieh_CQE} combined with the protocols of entanglement distribution (ED), teleportation (TP), and super-dense coding (SD).}
\label{fig:CQE_Unruh}
\end{figure}

\subsection{Private Dynamic Capacity Formula}

In this section we consider the trade-off between the noiseless classical resources of public communication, private communication, and secret key. Similar to the CQE triple trade-off region, this triple trade-off region characterizes the ability to use these resources along with the quantum channel to generate other noiseless classical resources. As in the case when we were considering quantum resources, we can determine these capabilities with aid from a capacity formula~\cite{Wilde_PrivateQuantumTO}. The private dynamic capacity formula for a quantum Hadamard channel $\mathcal{N}_{H}$ is as follows:
\begin{align*}
P_{\lambda,\mu}\left(  \mathcal{N}_{H}\right)  =\max_{\omega}I\left(
YX;B\right)  _{\omega}+\lambda\left[  H\left(  B|X\right)  _{\omega}-H\left(
E|X\right)  _{\omega}\right]  +\mu\left[  H\left(  B\right)  _{\omega
}-H\left(  E|X\right)  _{\omega}\right]  ,
\end{align*}
where $\omega$ is a state of the following form:%
\begin{align*}
\omega^{XYBE}\equiv\sum_{x,y}p_{X,Y}\left(  x,y\right)  \left\vert
x\right\rangle \left\langle x\right\vert ^{X}\otimes\left\vert y\right\rangle
\left\langle y\right\vert ^{Y}\otimes \mathcal{U}_{\mathcal{N}_{H}}\left(  \psi
_{x,y}\right),
\end{align*}
$\mathcal{U}_{\mathcal{N}_{H}}$ is an isometric extension of the channel
$\mathcal{N}_{H}$, and the states $\psi_{x,y}$ are pure.

\begin{theorem}
\label{thm:RPS_states}
An ensemble of the following form suffices to maximize the private dynamic
capacity formula of a $1\rightarrow k$ universal qudit cloning channel:
\begin{align}
\label{eq:RPS_states}
\dfrac{1}{d} \left( \ketbra{1}{1}^X \otimes \psi_1^{YA'} + \cdots + \ketbra{d}{d}^X \otimes \psi_d^{YA'} \right)
\end{align}
where
\begin{align}
\psi_1^{YA'} &= (1 - \nu_1 - \cdots \nu_{d-1}) \ketbra{1}{1}^Y \otimes \ketbra{1}{1}^{A'} + \nu_1 \ketbra{2}{2}^Y \otimes \ketbra{2}{2}^{A'} + \cdots + \nu_{d-1} \ketbra{d}{d}^Y \otimes \ketbra{d}{d}^{A'}, \nonumber \\
&\vdots \nonumber \\
\psi_d^{YA'} &= \nu_1 \ketbra{1}{1}^Y \otimes \ketbra{1}{1}^{A'} + \nu_2 \ketbra{2}{2}^Y \otimes \ketbra{2}{2}^{A'}  \cdots + (1- \nu_1 - \cdots - \nu_{d-1}) \ketbra{d}{d}^Y \otimes \ketbra{d}{d}^{A'}.  \nonumber
\end{align}
\end{theorem}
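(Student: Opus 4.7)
The plan is to adapt the $SU(d)$-covariance argument used in the proof of Theorem~\ref{thm:trade-off-states} to the private dynamic capacity formula. I would first rewrite the objective into a more symmetric form: since the states $\psi_{x,y}$ are pure, the isometric extension $\mathcal{U}_{\mathcal{N}_H}$ produces pure states on $BE$ conditioned on $(x,y)$, so that $H(B|YX)_\omega = H(E|YX)_\omega$ and therefore $I(YX;B)_\omega = H(B)_\omega - H(E|YX)_\omega$. The private dynamic capacity formula then reduces to
\begin{equation*}
(1+\mu)\,H(B)_\omega + \lambda\, H(B|X)_\omega - (\lambda+\mu)\,H(E|X)_\omega - H(E|YX)_\omega.
\end{equation*}

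Next, I would apply the same two-register augmentation used in Theorem~\ref{thm:trade-off-states}: attach classical registers $J,K$ and randomly apply the generalized Pauli $X(j)Z(k)$ to the input with uniform weight $1/d^2$, producing a state $\sigma^{XYJKA'}$. Because $\frac{1}{d^2}\sum_{j,k} X(j)Z(k)\psi Z(k)^\dagger X(j)^\dagger = I/d$ for every pure $\psi$, the marginal $\sigma^{A'}$ is maximally mixed, which by the computation in~\eqref{eq:sigmab} forces $H(B)_\sigma = \log\binom{k+d-1}{d-1}$. By the $SU(d)$-covariance of $Cl_{1\to k}^{(d)}$ and of its complement, conditioning on $(J,K)$ merely permutes the $B$ and $E$ states by unitaries and therefore leaves $H(B|XJK)_\sigma$, $H(E|XJK)_\sigma$, and $H(E|YXJK)_\sigma$ equal to their counterparts on $\omega$. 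Concavity of the von Neumann entropy then gives $H(B)_\sigma \geq H(B)_\omega$, and since $(1+\mu)\ge 0$ the augmented state is at least as good as $\omega$. I may therefore restrict attention to ensembles satisfying $\omega^{A'}=I/d$, absorbing $J,K$ into $X$.

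The decisive simplification is that by covariance the quantity $H(E)_{\mathcal{N}^c(\psi_{x,y})}$ depends only on the unitary orbit of $\psi_{x,y}$; since all pure qudits form a single $SU(d)$-orbit, this value is constant in $(x,y)$, so $H(E|YX)_\omega$ collapses to a single number independent of the ensemble decomposition. The remaining optimization is $\sum_x p_X(x)\bigl[\lambda\, H(B)_{\omega_x} - (\lambda+\mu)\,H(E)_{\omega_x}\bigr]$ subject to $\sum_x p_X(x)\omega_x = I/d$, where $\omega_x = \sum_y p(y|x)\psi_{x,y}$. I would upper-bound this sum by the maximum of the bracketed quantity over a single density operator $\omega^\ast$, and invoke covariance once more to take $\omega^\ast$ diagonal, say $\omega^\ast = (1-\sum_{i=1}^{d-1}\nu_i)\ketbra{1}{1} + \sum_{i=1}^{d-1}\nu_i\,\ketbra{i+1}{i+1}$.

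The bound is then achieved constructively by taking $X$ uniform on $\{1,\dots,d\}$, setting $\omega_x = X(x-1)\omega^\ast X(x-1)^\dagger$ (a cyclic shift of the eigenvalues), and decomposing each diagonal $\omega_x$ into its computational-basis pure states labelled by $Y$: conditional on $X=x,\,Y=y$, the input is $\psi_{x,y}=\ket{y}^{A'}$ with weight $(\omega_x)_{yy}$. The identity $\frac{1}{d}\sum_x X(x-1)\omega^\ast X(x-1)^\dagger = I/d$ automatically enforces the input-marginal constraint, while by covariance each $\omega_x$ saturates the single-state bound. Matching indices produces precisely the cyclic ensemble structure displayed in~\eqref{eq:RPS_states}. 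The main subtlety I anticipate is handling the $H(E|YX)$ term cleanly, since the private dynamic capacity formula couples three distinct conditional entropies; once the pure-state covariance collapses this third term to a constant, however, the reasoning reduces to the same convexity-plus-Pauli-orbit argument as in Theorem~\ref{thm:trade-off-states}.
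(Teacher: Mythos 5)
Your proposal is correct and takes essentially the same route as the paper's proof: the generalized-Pauli augmentation forcing a maximally mixed input marginal, covariance plus concavity to pin $H(B)$ at $\log\binom{k+d-1}{d-1}$, the collapse of the entropy conditioned on $(X,Y)$ to a pure-state constant, the max-over-$x$ relaxation, and achievability via the cyclic-shift ensemble of a diagonal maximizer. The only cosmetic differences are that you rewrite $H(B|YX)_\omega$ as $H(E|YX)_\omega$ using purity of the conditional $BE$ states (the paper keeps the $B$-side term) and that you track the input-marginal constraint explicitly before relaxing it.
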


\begin{proof}
We shall denote the $1 \rightarrow k$ universal qudit cloner as $Cl_{1\rightarrow k}^{(d)}$ and its complement by $\mathcal{S}_k^{(d)}$. We exploit the following classical-quantum states:%
\begin{align*}
\rho^{XYA^{\prime}} &  \equiv\sum_{x,y}p_{X}\left(  x\right)  p_{Y|X}\left(
y|x\right)  \left\vert x\right\rangle \left\langle x\right\vert ^{X}%
\otimes\left\vert y\right\rangle \left\langle y\right\vert ^{Y}\otimes
\phi_{x,y}^{A^{\prime}},\\
\sigma^{XYIJA^{\prime}} &  \equiv\sum_{x,y}\sum_{i,j=0}^{d-1}\frac{1}{d^{2}%
}p_{X}\left(  x\right)  p_{Y|X}\left(  y|x\right)  \left\vert x\right\rangle
\left\langle x\right\vert ^{X}\otimes\left\vert y\right\rangle \left\langle
y\right\vert ^{Y}\otimes\left\vert i\right\rangle \left\langle i\right\vert
^{I}\otimes\left\vert j\right\rangle \left\langle j\right\vert ^{J}\otimes
X\left(  i\right)  Z\left(  j\right)  \phi_{x,y}^{A^{\prime}}Z^{\dag}\left(
j\right)  X^{\dag}\left(  i\right)  ,
\end{align*}
where the states $\phi_{x,y}^{A^{\prime}}$ are pure, $X\left(  i\right)  $ and
$Z\left(  j\right)  $ are the generalized Pauli operators and let $\rho
^{XYBE}$ and $\sigma^{XYIJBE}$ be the states obtained by transmitting the
$A^{\prime}$ system through the isometric extension of the universal cloning channel.
Let $\sigma_{x}^{A^{\prime}Y}\equiv\sum_{y}p_{Y|X}\left(  y|x\right)
\left\vert y\right\rangle \left\langle y\right\vert ^{Y}\otimes\phi
_{x,y}^{A^{\prime}}$.

The universal qudit cloning channel is covariant by construction~\cite{Gisin, Werner, Keyl} and
the following relationships hold for any input density operator $\sigma$ and
any unitary $V$ acting on the input system $A^{\prime}$:%
\begin{align*}
Cl_{1\rightarrow k}^{(d)}\left(  V\sigma V^{\dag}\right)   &  =R_{V}%
Cl_{1\rightarrow k}^{(d)}\left(  \sigma\right)  R_{V}^{\dag},\\
\mathcal{S}_k^{(d)}\left(  V\sigma V^{\dag}\right)   &  =S_{V}%
\mathcal{S}_k^{(d)}\left(  \sigma\right)  S_{V}^{\dag},
\end{align*}
where $R_{V}$ and $S_{V}$ are higher-dimensional irreducible representations
of the unitary $V$ on the respective systems $B$ and $E$. The state
$\sigma^{B}$ is equal to the maximally mixed state on the completely symmetric
subspace of $N$ qudits for the following reasons:%
\begin{align}
\sigma^{B}=Cl_{1\rightarrow k}^{(d)}\left(  \sigma^{A^{\prime}}\right)
=Cl_{1\rightarrow k}^{(d)}\left(  \frac{I^{A^{\prime}}}{2}\right)
=Cl_{1\rightarrow k}^{(d)}\left(  \int V\omega V^{\dag}\ \text{d}V\right)  &=\int
R_{V}Cl_{1\rightarrow k}^{(d)}\left(  \omega\right)  R_{V^{\dag}}\ \text{d}V \nonumber \\
&=\frac{1}{p_{k}^{d}}\sum_{i=0}^{p_{k}^{d}-1}\left\vert i\right\rangle \left\langle
i\right\vert ^{B},\label{eq:cloning-unital-relation}%
\end{align}
where $p_{k}^{d}\equiv\binom{k+d-1}{d-1}$ and the fourth equality exploits the
linearity and covariance of the universal qudit cloning channel $Cl_{1\rightarrow k}^{(d)}$.

Consider the following chain of inequalities:%
\begin{align*}
&  H\left(  B\right)  _{\rho}-H\left(  B|YX\right)  _{\rho}+\lambda\left[
H\left(  B|X\right)  _{\rho}-H\left(  E|X\right)  _{\rho}\right]  +\mu\left[
H\left(  B\right)  _{\rho}-H\left(  E|X\right)  _{\rho}\right]  \\
&  =\left(  \mu+1\right)  H\left(  B\right)  _{\rho}-H\left(  B|YX\right)
_{\rho}+\lambda H\left(  B|X\right)  _{\rho}-\left(  \lambda+\mu\right)
H\left(  E|X\right)  _{\rho}\\
&  =\left(  \mu+1\right)  H\left(  B\right)  _{\rho}-H\left(  B|YXIJ\right)
_{\sigma}+\lambda H\left(  B|XIJ\right)  _{\sigma}-\left(  \lambda+\mu\right)
H\left(  E|XIJ\right)  _{\sigma}\\
&  \leq\left(  \mu+1\right)  H\left(  B\right)  _{\sigma}-H\left(
B|YXIJ\right)  _{\sigma}+\lambda H\left(  B|XIJ\right)  _{\sigma}-\left(
\lambda+\mu\right)  H\left(  E|XIJ\right)  _{\sigma}\\
&  =\left(  \mu+1\right)  \log\left(  p_{k}^{d}\right)  -\sum_{x,y}%
p_{X}\left(  x\right)  p_{Y|X}\left(  y|x\right)  H(B)_{Cl_{1\rightarrow k}^{(d)}(\phi)}  \\
& \qquad \qquad \qquad \qquad \qquad +\sum_{x}p_{X}\left(  x\right)  \left[  \lambda H\left(
B\right)  _{Cl_{1\rightarrow k}^{(d)}(\sigma_{x}^{A^{\prime}})}-\left(  \lambda+\mu\right)
H\left(  E\right)  _{\mathcal{S}_k^{(d)}(\sigma_{x}^{A^{\prime}})}\right]  \\
&  \leq\left(  \mu+1\right)  \log\left(  p_{k}^{d}\right)  -H(B)_{Cl_{1\rightarrow k}^{(d)}(\phi)}  +\lambda H\left(  B\right)  _{Cl_{1\rightarrow k}^{(d)}(\sigma
_{x}^{\ast})}-\left(  \lambda+\mu\right)  H\left(  E\right)  _{\mathcal{S}_k^{(d)}(\sigma_{x}^{\ast})}\\
&  =\log\left(  p_{k}^{d}\right)  -H(B)_{Cl_{1\rightarrow k}^{(d)}(\phi)}
+\lambda\left[  H\left(  B\right)  _{Cl_{1\rightarrow k}^{(d)}(\sigma_{x}^{\ast})}-H\left(
E\right)  _{\mathcal{S}_k^{(d)}(\sigma_{x}^{\ast})}\right]  + \mu \left[  \log\left(
p_{k}^{d}\right)  -H\left(  E\right)  _{\mathcal{S}_k^{(d)}(\sigma_{x}^{\ast}%
)}\right]  .
\end{align*}
The first equality follows by rearranging terms. The second equality follows
because the conditional entropies are invariant under unitary transformations:%
\begin{align*}
H(B)_{R_{\sigma_{j}}\rho_{x}^{B}R_{\sigma_{j}}^{\dag}}=H(B)_{\rho_{x}^{B}%
},\ \ \ \ \ \ H(E)_{S_{\sigma_{j}}\rho_{x}^{E}S_{\sigma_{j}}^{\dag}%
}=H(E)_{\rho_{x}^{E}},
\end{align*}
where $R_{\sigma_{j}}$ and $S_{\sigma_{j}}$ are higher-dimensional
representations of $\sigma_{j}$ on systems $B$ and $E$, respectively. The
first inequality follows because entropy is concave, i.e., the local state
$\sigma^{B}$ is a mixed version of $\rho^{B}$. The third equality follows
because (\ref{eq:cloning-unital-relation}) implies that $H(B)_{\sigma^{B}%
}=\log\left(  p_{k}^{d}\right)  $, from applying unitary covariance of the
universal cloning channel to the term $H\left(  B|YXI\right)  _{\sigma}=\sum_{x,y}%
p_{X}\left(  x\right)  p_{Y|X}\left(  y|x\right)  H\left(  B\right)
_{Cl_{1\rightarrow k}^{(d)}\left(  \phi_{x,y}\right)  }$, since all pure states have the same
output entropy we have just replaced the state $\phi_{x,y}$ by an arbitrary pure state $\phi$ of our choice, 
and from expanding the conditional entropies
$H\left(  B|XIJ\right)  _{\sigma}$ and $H\left(  E|XIJ\right)  _{\sigma}$. The
second inequality follows because the maximum value of a realization of a
random variable is not less than its expectation. The final equality follows
by rearranging terms.

The entropies $H(B)_{Cl_{1\rightarrow k}^{(d)}(\sigma_{x}^{\ast})}$ and $H(E)_{\mathcal{S}_k^{(d)}(\sigma_{x}^{\ast})}$ depend only on the eigenvalues of the input state
$\sigma_{x}^{\ast}$ by the covariance of both the universal cloning channel and its
complement. We can therefore choose $\sigma_{x}^{\ast}$ to be a state diagonal
in the $\{\ket{1}, \ket{2} ,\cdots ,\ket{d} \}$ basis of
$A^{\prime}$, and without loss of generality, suppose these eigenvalues are
equal to $\left\{  \nu_{1},\ldots,\nu_{d}\right\}  $. The ensemble defined
to consist of $\left\{  X\left(  j\right)  \sigma_{x}^{\ast}X^{\dag}\left(
j\right)  \right\}  _{j=1}^{d}$ assigned equal probabilities then saturates
the upper bound.
\end{proof}

\begin{theorem}
The private dynamic capacity region for the qudit Unruh channel is given by the convex hull of the regions characterized by the following set of inequalities, where $R$ is the rate of classical public communication, $P$ is the rate of classical private communication, and $S$ is the rate of secret key generation:
\begin{align}
R +P &\le \sum_{k=1}^{\infty} p_k(z) \left( H(B)_{\rho_k} - \sum_{b=1}^k  m_{b,k} \dfrac{b}{M_k} \log{\dfrac{b}{M_k}} \right), \\
P+S &\le \sum_{k=1}^{\infty} p_k(z) \left( H(B|X)_{\rho_k} -  H(E|X)_{\rho_k} \right), \\
R+P+S &\le \sum_{k=1}^{\infty} p_k(z) \left( H(B)_{\rho_k} - H(E|X)_{\rho_k} \right) .
\end{align}
$\rho_k$ is the output state of a $1 \rightarrow k$ universal qudit cloner with an input state of the form~\eqref{eq:CQEoptimalstate}, $m_b = { (k-b) + d -2 \choose d-2}$, and $M_k = { k +d-1 \choose d}$.
\end{theorem}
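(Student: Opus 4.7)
The plan is to leverage three ingredients that are already in hand: (i) the Hadamard property of the qudit Unruh channel, which forces the private dynamic capacity region to single-letterize and match the form quoted before the theorem; (ii) Theorem~\ref{thm:RPS_states}, which identifies an optimal ensemble for each constituent $1\to k$ universal qudit cloner; and (iii) the block-diagonal decomposition $\mathcal{N} = \bigoplus_{k=1}^{\infty} p_k(z)\, Cl_{1\rightarrow k}^{(d)}$ with $p_k(z)=(1-z)^{d+1}z^{k-1}\binom{k+d-1}{d}$. The strategy is to repeat the argument of Theorem~\ref{thm:RPS_states}, but now for the full Unruh channel: because the $SU(d)$-covariance and unitality pass through the direct sum, the same symmetrization (tensoring with classical registers $I,J$ holding generalized Pauli twirls $X(i)Z(j)$) shows that the ensemble of Theorem~\ref{thm:trade-off-states}/Theorem~\ref{thm:RPS_states}, pushed through $\mathcal{N}$, saturates the private dynamic capacity formula $P_{\lambda,\mu}(\mathcal{N})$ for every $\lambda,\mu\ge 0$. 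Tracing out the boundary in $(\lambda,\mu)$ then yields exactly the three inequalities stated.

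The core computation is the reduction of each entropy in $P_{\lambda,\mu}$ to a weighted sum over blocks. Because the output lives in orthogonal block subspaces of dimensions $\binom{k+d-1}{d-1}$, for any input the output density operator $\rho^B$ decomposes as $\bigoplus_k p_k(z)\,\rho_k^B$, giving the identity
\begin{equation*}
H(B)_\rho \;=\; H(p(z)) \;+\; \sum_{k=1}^{\infty} p_k(z)\, H(B)_{\rho_k},
\end{equation*}
and analogous identities for $H(B|X)$, $H(E|X)$, and $H(B|YX)$, where $H(p(z))=-\sum_k p_k(z)\log p_k(z)$ is the classical entropy of the block label. The only new quantity relative to the proof of Theorem~\ref{thm:CQEregion} is $H(B|YX)$, which is the output entropy of $\mathcal{N}$ on a pure input. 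By $SU(d)$-covariance it does not depend on which pure input we choose, so each block contribution equals the entropy of the $1\to k$ cloner on $|1\rangle$; using the eigenvalue/multiplicity calculation already performed in the proof of Lemma~\ref{lem:blockcloner} this gives exactly $-\sum_{b=1}^k m_{b,k}\,(b/M_k)\log(b/M_k)$, with $m_{b,k}$ and $M_k$ as in the statement.

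Assembling the pieces, $I(YX;B)_\rho = H(B)_\rho - H(B|YX)_\rho$ cancels the classical $H(p(z))$ terms and produces the first stated inequality for $R+P$. For the remaining two bounds I would use the Hadamard/isometry identity $H(B|YX)=H(E|YX)$ (the input conditioned on $Y$ and $X$ is pure, so the $BE$ state is pure), which collapses $I(Y;B|X)-I(Y;E|X)$ to $H(B|X)-H(E|X)$ and $I(YX;B)-I(Y;E|X)$ to $H(B)-H(E|X)$; again the classical block-label entropies cancel, leaving the weighted-block expressions for $P+S$ and $R+P+S$. Taking the convex hull over choices of the parameters $\mu_1,\ldots,\mu_{d-1}$ and $\nu_1,\ldots,\nu_{d-1}$ parametrizing the ensemble in Theorem~\ref{thm:RPS_states} yields the full region.

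The main obstacle is bookkeeping rather than ideas: one must verify that the symmetrization argument of Theorem~\ref{thm:RPS_states} still saturates the upper bound when applied to the direct-sum channel (so that restricting to the product-of-cloners ensemble is genuinely optimal, not merely achievable), and one must carry the cancellation of the $H(p(z))$ terms through all three capacity inequalities without losing a block-label contribution. Once these are checked, the matching of the cloner-block entropies to $H(B|YX)$ via the eigenvalue multiplicities $m_{b,k}$ gives the explicit form in the statement.
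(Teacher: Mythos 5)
Your proposal is correct and follows essentially the same route as the paper's own proof: reuse the ensemble of Theorem~\ref{thm:RPS_states}, split every entropy across the orthogonal cloner blocks weighted by $p_k(z)$ (so the block-label terms $-\sum_k p_k(z)\log p_k(z)$ cancel in each mutual-information difference), evaluate $H(B|YX)$ on a pure input via covariance together with the eigenvalues and multiplicities from Lemma~\ref{lem:blockcloner}, and use purity of the conditional $BE$ state to get $H(B|YX)=H(E|YX)$. The only discrepancy is a sign: your (correct) per-block value $H(B|YX)_{\rho_k}=-\sum_{b=1}^k m_{b,k}(b/M_k)\log(b/M_k)$ gives $I(YX;B)=\sum_k p_k(z)\bigl(H(B)_{\rho_k}+\sum_{b=1}^k m_{b,k}(b/M_k)\log(b/M_k)\bigr)$, which agrees with the theorem's first inequality only under the paper's convention of writing that conditional entropy without the minus sign, so your claim that the computation ``produces the first stated inequality'' holds up to this sign slip, which is present in the paper's own proof as well.
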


\begin{proof}
Theorem~\ref{thm:RPS_states} provided the form of the input states needed to calculate the private dynamic capacity region for universal qudit cloning channels and since the qudit Unruh channel is composed of such channels, these states will be sufficient to characterize the private dynamic capacity region for the qudit Unruh channel. We now proceed to calculating the various entropies required to characterize the region for the output state $\rho^{XYBE}$.

\begin{align}
\rho^{XYBE} &= \dfrac{1}{d} (\ketbra{1}{1}^X \otimes \psi_1^{YBE} + \cdots + \ketbra{d}{d}^X \otimes \psi_d^{YBE}) \\
&= (1-z)^{d+1} \bigoplus_{k=1}^{\infty} \dfrac{z^{k-1}}{d} {k+d-1 \choose d} (\ketbra{1}{1}^X \otimes \psi_{k,1}^{YBE} + \cdots + \ketbra{d}{d}^X \otimes \psi_{k,d}^{YBE},
\end{align}
where $\psi_{k,d}^{YBE}$ are the output states of the $1 \rightarrow k$ universal qudit cloner. Since the form of the input state given by~\eqref{eq:RPS_states} is identical to the input state for the quantum dynamic capacity region, given by~\eqref{eq:CQEoptimalstate}, the entropic quantities $H(B)$, $H(B|X)$, and $H(E|X)$ will all be equal to those calculated in Theorem~\ref{thm:CQEregion}. That is,
\begin{align}
H(B) &= \sum_{k=1}^{\infty} -p_k(z) \log{p_k(z)} + p_k(z) H(B)_{\rho_k}, \\
H(B|X) &= \sum_{k=1}^{\infty} -p_k(z) \log{p_k(z)} + p_k(z) H(B|X)_{\rho_k}, \\
H(E|X) &= \sum_{k=1}^{\infty} -p_k(z) \log{p_k(z)} + p_k(z) H(E|X)_{\rho_k}, \\
\end{align}
where $\rho_k$ is the output of the $1 \rightarrow k$ universal qudit cloner with an input state of the form~\eqref{eq:CQEoptimalstate}, as described in Theorem~\ref{thm:CQCE_cloners}. Finally, if one conditions on both the $X$ and $Y$ subsystems, then the output state $\rho_{x,y}^{BE}$ is pure for a given $x$ and $y$. Since the channel is covariant, one can calculate the entropies $H(B|XY)$ and $H(E|XY)$ by only considering an input state of the form $\ketbra{1}{1}$. Thus,
\begin{align}
H(B|XY)_{\rho} &= H( \mathcal{N}(\ketbra{1}{1} ) = H \left( \bigoplus_{k=1}^{\infty} p_k(z) Cl_{1\rightarrow k}^{(d)} (\ketbra{1}{1}) \right) \\
&= \sum_{k=1}^{\infty} -p_k(z) \log{p_k(z)} + p_k(z) H(Cl_{1\rightarrow k}^{(d)} (\ketbra{1}{1}) ) \\
&= \sum_{k=1}^{\infty} -p_k(z) \log{p_k(z)} + p_k(z)  \sum_{b=1}^k  m_{b,k} \dfrac{b}{M_k} \log{\dfrac{b}{M_k}},
\end{align}
where the quantities $m_b = { (k-b) + d -2 \choose d-2}$ and $M_k = { k +d-1 \choose d}$ are calculated in Lemma~\ref{lem:blockcloner}. Since $H(B|XY)$ is calculated by taking the marginal entropy of a pure state, the marginal entropy $H(E|XY)$ must be identical, that is $H(E|XY) = H(B|XY)$.

\end{proof}


\begin{figure}
\centering
\epsfig{file=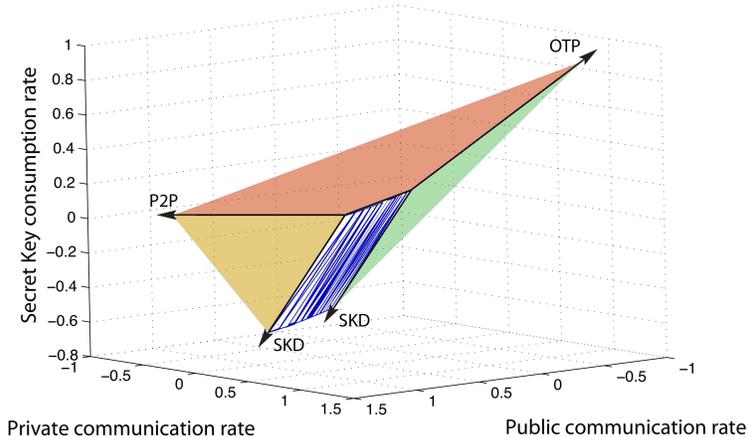, width=0.60\textwidth}
\caption{Private dynamic capacity region of the qudit Unruh channel for dimension~$d=5$ and acceleration parameter $z=0.75$. The region combines the publicly-enhanced private father protocol~\cite{Hsieh_PPcommunication} along with the protocols of the one-time pad (OTP), private-to-public communication (P2P), and secret key distribution (SKD).}
\label{fig:PRS_Unruh}
\end{figure}

We can then proceed to calculate the private dynamic capacity region for the qudit Unruh channel as outlined in Section~\ref{sec:PDC_region}. Figure~\ref{fig:PRS_Unruh} shows the private dynamic capacity region for dimension $d=5$ and acceleration coefficient $z=0.75$. The curvature of one part of the region's boundary demonstrates that the use of the publicly-enhanced private father protocol~\cite{Hsieh_PPcommunication} allows for the sender and receiver to achieve rates that surpass those from a time-sharing strategy.

\section{Conclusion}
\label{sec:conclusion}
We have calculated the full triple trade-off capacity region for the $d$-dimensional Unruh channel by proving that it belongs to the class of Hadamard channels, whose trade-off capacity formulas single-letterize.
While this is the same approach used for the qubit (2-dimensional) Unruh channel~\cite{Bradler_Hadamard}, the construction of the proof is different from that of previous works in order to capture the full generality of the dimension $d$. The first feature of our proof is that the Unruh channel has a direct sum form, where the elements belonging to the direct sum are $d$-dimensional universal qudit cloners. Secondly, we have provided an explicit construction of the Kraus operators for the complementary channel of the $1 \rightarrow 2$ universal qudit cloner, enabling us to conclude that the Unruh channel itself is Hadamard through an inductive argument again based on the particular direct sum form of the channel.

The quantum dynamic capacity region for the qudit Unruh channel has a curved boundary, demonstrating that the use of the classically-enchanced father protocol~\cite{Hsieh_CQE} achieves rates superior to a time-sharing strategy.  The relative gain coming from the use of the improved coding strategies increases as the acceleration between the two parties increase. This is due to an increased weight on the $1 \rightarrow k$ universal qudit cloners with higher $k$ when the acceleration between the two parties is large since the probability distribution favours higher values of $k$ as the acceleration parameter $z$ grows. Similarly, the private dynamic capacity region for the qudit Unruh channel also demonstrates that the use of the publicly-enhanced private father protocol~\cite{Hsieh_PPcommunication} achieves rates greater than that of a time-sharing strategy when studying the trade-off of public communication, private communication, and shared secret key.

The connection between the qudit Unruh channel and the universal qudit cloning channels is powerful. Beyond being useful for calculations of entropic quantities, it would be interesting to pursue this connection further in quantum optics. A $1 \rightarrow 2$ universal qubit cloning channel has been experimentally realized using light amplifiers based on the physical process of stimulated emission in an optical fiber~\cite{Fasel}. There have been further proposals for qudits using time-bin entangled photons in optical fibers~\cite{deRiedmatten, Thew} which would then naturally lead to the following question: Could one implement the qudit Unruh channel in terms of these universal optical cloning devices through a network of optical fibers and amplifiers? In such a scheme, the acceleration parameter that was present in the relativistic setting would now be considered as the gain of optical amplifiers characterizing the ``strength'' of the amplification process.

While we have extended the analysis of the capacity regions to encompass the more general $d$-dimensional Unruh channel, it would be interesting to extend this generalization in future works. One could consider the capacity regions of this relativistic setting upon imposing an energy constraint on the system, which would constrain the mean number of photons to which Alice has access in order to encode her information. Would the freedom to encode in any number of modes which satisfy the energy constraint provide an added benefit to that of being restricted to encoding in a fixed $d$-dimensional encoder? If so, would the shape of the capacity regions be severely altered?

\section{Acknowledgements}
T. J. would like to acknowledge the support of NSERC through the USRA Program and the Alexander Graham Bell Canada Graduate Scholarship. K.B. greatly acknowledges the Office of Naval Research (grant No.~N000140811249). M. M. W. acknowledges support from the MDEIE (Qu\'{e}bec) PSR-SIIRI international collaboration grant. The authors would like to acknowledge Norbert~L\"utkenhaus and Patrick~Hayden for useful discussions.


\appendix

\section{$1 \rightarrow 2$ universal qudit cloning channel is Hadamard}
\label{app:entanglementbreaking}

\begin{proof}[Proof of Lemma~\ref{lem:entanglementbreaking}]
States $\ket{\psi(\vec{\textbf{n}})}$ form a set of rank-one operators. What remains to be shown is the following:
\begin{align}
\mathcal{S}_2^{(d)}(\rho)=\sum_i E_i\rho E_i^{\dagger},
\label{eq:1}
\end{align}
\begin{align}
I=\sum_i E_i^{\dagger}E_i,
\label{eq:2}
\end{align}
where $E_i$ are the rank-one Kraus operators outlined above.

The channel $\mathcal{S}_2^{(d)}$ performs the following mapping on an arbitrary pure input state:
\begin{align}
&\mathcal{S}_2^{(d)}(\rho)=\mathcal{S}_2^{(d)}\left(\sum_{i=1}^3\beta_ia_i^{\dagger}\ket{\text{vac}} \right) \\
&=
\dfrac{1}{d+1}
\begin{pmatrix}
2|\beta_1|^2+|\beta_2|^2+\cdots+|\beta_d|^2 & \beta_1\overline{\beta}_2 & \cdots &\beta_1\overline{\beta}_d \\
\overline{\beta}_1\beta_2  & |\beta_1|^2+2|\beta_2|^2+\cdots+|\beta_d|^2 & & \beta_2\overline{\beta}_d \\
\vdots & & \ddots & \vdots\\
\overline{\beta}_1\beta_d & &  \cdots & |\beta_1|^2+\cdots+|\beta_{d-1}|^2+2|\beta_d|^2
\end{pmatrix}.
\label{eq:outputmatrix}
\end{align}
In order to show \eqref{eq:1}, consider the matrix entry $\ketbra{p}{q}$ of $\sum_i E_i\rho E_i^{\dagger}$,
\begin{align}
(\sum_i E_i\rho E_i^{\dagger})_{pq}&=\sum_{n_2,\cdots,n_d=0}^3 \left(\dfrac{1}{4^{d-1}(d+1)}(i^{n_p}\ket{p})(\sum_j i^{n_j} \bra{j})(\sum_{k,l}\beta_k \overline{\beta}_l \ketbra{k}{l})(\sum_{j'}(-i)^{n_{j'}}\ket{j'})((-i)^{n_q}\bra{q})\right) \nonumber \\
&\qquad \qquad+\sum_{k,l}\dfrac{\beta_k\overline{\beta}_l}{d+1}\ket{p}\braket{p}{k}\braket{l}{q} \bra{q}\delta(p-q)\\
&=\sum_{n_2,\cdots,n_d=0}^3 \left(\dfrac{i^{n_p+n_q}(-1)^{n_q}}{4^{d-1}(d+1)}\sum_{j,j',k,l}i^{n_j+n_{j'}}(-1)^{n_{j'}}\beta_k \overline{\beta}_l \braket{j}{k}\braket{l}{j'}\right)\ketbra{p}{q} \nonumber \\
& \qquad \qquad + \dfrac{\beta_p \overline{\beta}_q}{d+1}\delta(p-q)\ketbra{p}{q}\\
&=\sum_{n_2,\cdots,n_d=0}^3 \left(\dfrac{i^{n_p+n_q}(-1)^{n_q}}{4^{d-1}(d+1)}\sum_{k,l} i^{n_k+n_l}(-1)^{n_l}\beta_k \overline{\beta}_l\right)\ketbra{p}{q}+\dfrac{\beta_p \overline{\beta}_q}{d+1}\delta(p-q)\ketbra{p}{q} .
\label{eq:KrausCheck}
\end{align}
Consider first the case when $p=q$, \eqref{eq:KrausCheck} then becomes
\begin{align}
(\sum_iE_i \rho E_i^{\dagger})_{pp}&=\sum_{n_2,\cdots,n_d=0}^3 \left(\dfrac{ i^{2n_p}(-1)^{n_q}}{4^{d-1}(d+1)}\sum_{k,l} i^{n_k+n_l}(-1)^{n_l}\beta_k \overline{\beta}_l\right)\ketbra{p}{p}+\dfrac{|\beta_p|^2}{d+1}\ketbra{p}{p}\\
&=\sum_{n_2,\cdots,n_d=0}^3 \left(\dfrac{(-1)^{2n_p}}{4^{d-1}(d+1)}\sum_{k,l} i^{n_k+n_l}(-1)^{n_l}\beta_k \overline{\beta}_l \right)\ketbra{p}{p}+\dfrac{|\beta_p|^2}{d+1}\ketbra{p}{p}\\
&=\dfrac{1}{4^{d-1}(d+1)}\sum_{n_2,\cdots,n_d=0}^3 \left(\sum_{k,l \atop(k \ne l)} i^{n_k+n_l}(-1)^{n_l}\beta_k \overline{\beta}_l+\sum_k  i^{2n_k}(-1)^{n_k} |\beta_k|^2\right)\ketbra{p}{p} \nonumber \\
& \qquad \qquad +\dfrac{|\beta_p|^2}{d+1}\ketbra{p}{p} \label{eq:CancellationStart}.
\end{align}
One can show the first double sum in~\eqref{eq:CancellationStart} is equal to zero by permuting over all possible values of $n_k$, which are independent of $n_l$. In order to explicitly show the cancellation of these terms, fix values for $k$ and $l$ such that $k \ne l$,
\begin{align}
\sum_{n_2,\cdots,n_d=0}^3 i^{n_k} (-i)^{n_l} \beta_k \overline{\beta}_l = \sum_{n_k=0}^3 i^{n_k} \beta_k \sum_{n_l=0}^3 (-i)^{n_l} \overline{\beta}_l=0
\end{align}
since $\sum_{n_l=0}^3 (-i)^{n_l} \overline{\beta}_l=0$ and this sum is independent of the value of $n_k$ since $k \ne l$. It is worth noting that in the case when $l=1$ this is not true since $n_1$ is fixed to be equal to 1. However in this case we obtain $\overline{\beta}_1\sum_{n_k=0}^3 i^{n_k} \beta_k=0$ since the sum over $k$ is now equal to 0 because $k \ne 1$.
\\

The coefficient in the second double sum in~\eqref{eq:CancellationStart} is equal to 1 and we just need to consider all permutations of $n_k$ for $k \ge 2$ in order to obtain the desired result,
\begin{align}
(\sum_iE_i \rho E_i^{\dagger})_{pp}&=\left(\dfrac{4^{d-1}}{4^{d-1}(d+1)}\sum_i |\beta_i|^2+\dfrac{|\beta_p|^2}{d+1}\right)\ketbra{p}{p}
\label{eq:CancellationEnd} \\
&=\dfrac{1}{d+1}(|\beta_1|^2+|\beta_2|^2+\cdots+|\beta_{p-1}|^2+2|\beta_p|^2+|\beta_{p+1}|^2+\cdots+|\beta_d|^2).
\end{align}

Now consider the case $p \ne q$ in~\eqref{eq:KrausCheck},
\begin{align}
&(\sum_i E_i\rho E_i^{\dagger})_{pq}=\sum_{n_2,\cdots,n_d=0}^3 \left(\dfrac{ i^{n_p+n_q}(-1)^{n_q}}{4^{d-1}(d+1)}\sum_{k,l} i^{n_k+n_l}(-1)^{n_l}\beta_k \overline{\beta}_l\right)\ketbra{p}{q}\\
& \qquad=\sum_{n_2,\cdots,n_d=0}^3 \dfrac{ i^{n_p+n_q}(-1)^{n_q}}{4^{d-1}(d+1)} \left( i^{n_p+n_q}(-1)^{n_q}\beta_p \overline{\beta}_q+ i^{n_q+n_p}(-1)^{n_p}\beta_q \overline{\beta}_p + i^{n_p} \sum_{l \atop (l \ne q)}  (-i)^{n_l} \beta_p \overline{\beta}_l \right.   \nonumber \\
&  \qquad \qquad + \left.  i^{n_q} \sum_{l \atop (l \ne p)} (-i)^{n_l} \beta_q \overline{\beta}_l + (-i)^{n_p} \sum_{k \atop (k \ne q)} i^{n_k} \beta_k \overline{\beta}_p +  (-i)^{n_q} \sum_{k \atop (k \ne p)} i^{n_k} \beta_k \overline{\beta}_q \right. \nonumber \\
& \qquad \qquad \qquad \qquad  + \left. \sum_{k,l \atop k,l \notin \{p,q\}} i^{n_k+n_l}(-1)^{n_l}\beta_k \overline{\beta}_l \right) \ketbra{p}{q}
\end{align}
\begin{align}
&=\dfrac{1}{4^{d-1}(d+1)} \sum_{n_2,\cdots,n_d=0}^3 \left( i^{2(n_p+n_q)}(-1)^{2n_q}\beta_p \overline{\beta}_q+ i^{2(n_q+n_p)}(-1)^{n_p+n_q}\beta_q \overline{\beta}_p +  i^{2n_p} (-i)^{n_q} \sum_{l \atop (l \ne q)} (-i)^{n_l} \beta_p \overline{\beta}_l \right. \nonumber \\
& \qquad \qquad \qquad \qquad \qquad +  i^{n_p} \sum_{l \atop (l \ne p)} (-i)^{n_l} \beta_q \overline{\beta}_l + (-i)^{n_q} \sum_{k \atop (k \ne q)} i^{n_k} \beta_k \overline{\beta}_p + i^{n_p}(-i)^{2n_q} \sum_{k \atop (k \ne p)} i^{n_k} \beta_k \overline{\beta}_q \nonumber \\
& \qquad \qquad \qquad \qquad \qquad \qquad \left. + i^{n_p+n_q}(-1)^{n_q}\sum_{k,l \atop k,l \notin \{p,q\}} i^{n_k+n_l}(-1)^{n_l}\beta_k \overline{\beta}_l\right)\ketbra{p}{q}\\
&=\dfrac{1}{4^{d-1}(d+1)} \sum_{n_2,\cdots,n_d=0}^3 \left( (-1)^{n_p+n_q}\beta_p \overline{\beta}_q+\beta_q \overline{\beta}_p +  (-1)^{n_p} (-i)^{n_q} \sum_{l \atop (l \ne q)} (-i)^{n_l} \beta_p \overline{\beta}_l \right. \nonumber \\
& \qquad \qquad \qquad \qquad \qquad +  i^{n_p} \sum_{l \atop (l \ne p)} (-i)^{n_l} \beta_q \overline{\beta}_l + (-i)^{n_q} \sum_{k \atop (k \ne q)} i^{n_k} \beta_k \overline{\beta}_p + i^{n_p}(-1)^{n_q} \sum_{k \atop (k \ne p)} i^{n_k} \beta_k \overline{\beta}_q \nonumber \\
& \qquad \qquad \qquad \qquad \qquad \qquad \left. + i^{n_p+n_q}(-1)^{n_q}\sum_{k,l \atop k,l \notin \{p,q\}} i^{n_k+n_l}(-1)^{n_l}\beta_k \overline{\beta}_l\right)\ketbra{p}{q} \label{eq:Cancellation2}.
\end{align}
All terms in~\eqref{eq:Cancellation2} are zero except for the second term because we can permute the values of $n_p$ or $n_q$ in a similar way to the technique used in~\eqref{eq:CancellationStart}-\eqref{eq:CancellationEnd}. Finally there are $4^{d-1}$ copies of the second term since this term is independent of the permuted values of $n_k$ for $k \ge 2$. Thus
\begin{align}
(\sum_i E_i\rho E_i^{\dagger})_{pq}=\dfrac{1}{d+1}\beta_q \overline{\beta}_p,
\end{align}
which agrees with the output density matrix in~\eqref{eq:outputmatrix}.
\\


Now show~\eqref{eq:2}:
\begin{align}
\sum_i E_i^{\dagger}E_i &= \dfrac{1}{d+1}\left( \sum_j(\ketbra{j}{j})(\ketbra{j}{j}) \right. \nonumber \\
&\qquad \qquad \qquad \left. + \dfrac{1}{4^{d-1}} \sum_{n_2, \cdots, n_d=0}^3 (\sum_{k=1}^d i^{n_k}\ket{k})(\sum_{k'=1}^d i^{n_{k'}} \bra{k'})(\sum_{l=1}^d (-i)^{n_l}\ket{l})(\sum_{l'=1}^d (-i)^{n_{l'}} \bra{l'}) \right) \\
&= \dfrac{1}{d+1} \left( \sum_j \ketbra{j}{j} + \dfrac{1}{4^{d-1}} \sum_{n_2, \cdots, n_d=0}^3 (\sum_{k=1}^d i^{n_k}\ket{k})(\sum_{l=1}^d (1)^{n_l})(\sum_{l'=1}^d (-i)^{n_{l'}} \bra{l'}) \right) \\
&= \dfrac{1}{d+1} \left( \sum_j \ketbra{j}{j} + \dfrac{d}{4^{d-1}} \sum_{n_2, \cdots, n_d=0}^3 (\sum_{k=1}^d i^{n_k}\ket{k})(\sum_{l'=1}^d (-i)^{n_{l'}} \bra{l'}) \right) \label{eq:Cancellation3}
\end{align}
Now in the case when $k \ne l'$ in~\eqref{eq:Cancellation3}, one can apply the same technique as in \eqref{eq:CancellationStart}-\eqref{eq:CancellationEnd} to show that this last term is equal to zero. However when $k=l'$ one obtains
\begin{align}
\sum_i E_i^{\dagger}E_i &= \dfrac{1}{d+1} \left( \sum_j\ketbra{j}{j}+\dfrac{d}{4^{d-1}} \sum_{n_2,\cdots, n_d=0}^3 \sum_{k=1}^d  i^{n_k} (-i)^{n_k} \ketbra{k}{k} \right) \\
&= \dfrac{1}{d+1} \left( \sum_j \ketbra{j}{j} + \dfrac{d}{4^{d-1}} \sum_{n_2,\cdots, n_d=0}^3  \sum_{k} \ketbra{k}{k} \right) \\
&= \dfrac{1}{d+1} \left( \sum_j \ketbra{j}{j} + d  \sum_{j} \ketbra{j}{j} \right) = \sum_j \ketbra{j}{j}
\end{align}
which agrees with~\eqref{eq:2}.

\end{proof}

\bibliographystyle{unsrt}
\bibliography{bibtex}

\end{document}